\newtheorem{theorem}{Theorem}[section]
\newtheorem{proposition}[theorem]{Proposition}
\theoremstyle{definition}
\newcommand{\f}{\operatorname}
\begin{document}

\title{Objective Bayesian Inference for Repairable System Subject to Competing Risks}

\author{Marco~Pollo, 
        Vera~Tomazella,
         Gustavo~Gilardoni, Pedro L. Ramos, Marcio J. Nicola and Francisco Louzada
\thanks{Marco Pollo with the Department
of Statistics, UFSCar-USP, S\~ao Carlos,
SP, Brazil, e-mail: mpa@usp.br. 
Vera Tomazella is with the Department of Statistics, UFSCar, S\~ao Carlos, SP, Brazil, e-mail: vera@ufscar.br. 
Gilardoni with the Department of Statistics, UNB, Brasilia, Brazil. 
Pedro L. Ramos with the Institute of Mathematical Science and Computing,University of S\~ao Paulo, S\~ao Carlos,
SP, Brazil, e-mail: pedrolramos@usp.br.
Marcio J. Nicola with the Institute of Mathematical Science and Computing,University of S\~ao Paulo, S\~ao Carlos,
SP, Brazil, e-mail: marcio.nicola@icmc.usp.br.
Francisco Louzada is with the Institute of Mathematical Science and Computing,University of S\~ao Paulo, S\~ao Carlos,
SP, Brazil, e-mail: louzada@icmc.usp.br.}
\thanks{Manuscript received March 29, 2018; revised March 29, 2018.}}

\markboth{ }%
{Shell \MakeLowercase{\textit{et al.}}: Bare Demo of IEEEtran.cls for IEEE Journals}

\maketitle

\begin{abstract}
Competing risks models for a repairable system 
subject to several failure modes are discussed.
Under minimal repair, it is assumed that each failure mode has a power law intensity.
An orthogonal reparametrization is used 
to obtain an objective Bayesian prior which is invariant 
under relabelling of the failure modes.
The resulting posterior is a product of 
gamma distributions and has appealing properties: 
one-to-one invariance, consistent marginalization and 
consistent sampling properties. Moreover, the resulting 
Bayes estimators have closed-form expressions and 
are naturally unbiased for all the parameters of the model.
The methodology is applied in the analysis of 
(i) a previously 
unpublished dataset about recurrent failure history 
of a sugarcane harvester and 
(ii) records of automotive warranty claims introduced in 
\cite{somboonsavatdee2015statistical}.
A simulation study was carried out 
to study the efficiency of the methods proposed.
\end{abstract}

\begin{IEEEkeywords}
Bayesian analysis, competing risks, power law process, reference prior, Jeffreys prior, repair.
\end{IEEEkeywords}

\section*{ACRONYMS AND ABBREVIATIONS}
\vspace{-0.25cm}
\begin{table}[ht]
{\normalsize
\begin{tabular}{l l } 
Acronyms &  \\
PDF & Probability density function. \\
ML & Maximum likelihood. \\
MLEs & Maximum likelihood estimates. \\
MAP & Maximum a Posteriori. \\ 
MRE & Mean relative error.  \\
MSE & Mean square error.  \\
CP & Coverage probability. \\ 
CI & Credibility intervals. \\
NHPP & Non-homogeneous Poisson process. \\
PLP & Power law process. \\
ABAO & As bad as old. \\
CMLE & Conditionaly Unbiased MLE
\end{tabular}}
\end{table}

\section*{NOTATION}
\vspace{-0.25cm}
\begin{table}[ht]
{\normalsize
\begin{tabular}{l l } 
$\lambda(\cdot)$ & Intensity function. \\
$\lambda_j(\cdot)$ & Cause-specific intensity function. \\
$\Lambda(\cdot)$ & Cumulative intensity function. \\
$\Lambda_j(\cdot)$ & Cause-specific Cumulative intensity function. \\
$N_j(\cdot)$ & Cause-specific counting process. \\
$t$ & Continuous random variable represent failure time\\
\end{tabular}}
\end{table}

\begin{table}[ht]
{\normalsize
\begin{tabular}{l l } 
$t_i$ & i-th failure time. \\
$t_{ij}$ & i-th failure time of the cause j. \\
$\beta_j$ & Shape parameter of the cause j. \\
$\mu_j$ & Scale parameter of the cause j. \\
$\alpha_j$ & Mean number of failures of the cause j. \\
$\boldsymbol\theta$ & General parameters vector. \\
$\hat\beta_{j}^{MLE}$ & MLE of $\beta_j$. \\
$n$ & Number of failures of the system. \\
$n_j$ & Number of failures of the cause j. \\
$L(\cdot)$ & Likelihood function. \\
$\ell(\cdot)$ & Log-likelihood function. \\
$\gamma(\cdot)$ & Probability density function of gamma distribution. \\
$H(\cdot)$ & Fisher information matrix. \\
$\pi^{J}(\cdot)$ & Jeffreys prior distribution. \\
$\pi^{R}(\cdot)$ & Reference prior distribution. \\
$\pi(\cdot | \cdot)$ & Posterior distribution. \\
$\hat\theta^{MAP}$ & Maximum a posteriori estimator. \\
$\delta$ &  Cause of failure indicator. \\
$E(\cdot)$ & Expectation. \\
$\mathbb{I}(\cdot)$ & Indicator function. \\
$\hat\beta_{j}^{Bayes}$ & Bayes Estimator of $\beta_j$. \\
\end{tabular}}
\end{table}

\IEEEpeerreviewmaketitle

\vspace{0.5cm}

\section{Introduction}

\vspace{0.3cm}

\IEEEPARstart{T}{he} study of 
recurrent event data is important 
in many areas such as engineering and medicine. 
In the former, interest is usually centered in failure data 
from repairable systems. Efficient modeling and analysis 
of this data allows the operators of the equipments to 
design better maintenance policies, see 
\cite{escobar1998statistical, rigdon2000statistical,ascher1984repairable}.

In the repairable system literature, it is often assumed 
that failures occur following a NHPP with power law intensity. 
The resulting process is usually referred to as PLP.
The PLP process is convenient because it is 
easy to implement, flexible and the 
parameters have nice interpretation. Regarding classical 
inference for the PLP, see, for instance, 
Ascher and Feingold \cite{ascher1984repairable} or 
Ridgon and Basu \cite{rigdon2000statistical}.
Bayesian inference has been considered 
among others by 
Bar-Lev et al. \cite{bar1992bayesian}, Guida et al. \cite{guida1989bayes}, Pievatolo and Ruggeri \cite{pievatolo2004bayesian} and Ruggeri \cite{ruggeri2006reliability}. In this direction,
Oliveira et al. \cite{de2012bayesian} introduced 
an orthogonal parametrization 
(\cite{cox1987parameter, pawitan2001all})
of the PLP which 
simplifies both the analysis and the interpretation of the results.

In complex systems, such as supercomputers, 
airplanes or cars in the reliability area,
it is common the presence of multiple causes of failure. Traditionally, models with this characteristic are known as competing risks and are analyzed as if the causes had an independent structure, or analogously, a system where their $p$ components are connected in series, that is, when a component fails the whole system fails. In this paper we advocate the use of cause-specific intensity functions because they are observable quantities (unlike of the latent failure time approach) in competing risk analysis. See Pintilie \cite{pintilie2006competing}, Crowder et al. \cite{crowder2001classical} and Lawless \cite{lawless2011statistical}
for an overview about this approach.
 
Recently, Somboonsavatdee and Sen \cite{somboonsavatdee2015statistical} 
discussed classical inference for a repairable system that is subjected to multiple causes. 
Fu et al. \cite{fu2014objective} conducted a Bayesian reference analysis to model recurrent events with competing risks. They modeled failure types through a proportional intensity homogeneous Poisson process along with fixed covariates. In this framework, the objective of the authors was to estimate the cause-specific intensity functions. 
They used three reference priors according to different orders of grouping of model parameters. 
We note that, in the multiparamenter case, the overall reference prior 
may depend on which components of the vector 
of parameters are considered to be of primary interest. 
This problem can be quite challenging 
(see Berger et al. \cite{berger2015overall}, for a detailed discussion). 
Since we are interested simultaneously in all the parameters of the model, we propose an orthogonal reparametrization in the Somboonsavatdee and Sen \cite{somboonsavatdee2015statistical} model in order to obtain a unique overall reference prior.

The main purpose of this paper is to discuss Bayesian inference 
under an overall reference prior for the parameters of the PLP intensities 
under competing risks. 
We prove that the overall prior is also a matching prior, i.e., 
the resulting marginal posterior intervals have accurate frequentist coverage \cite{tibshirani1989}. The resulting posterior is proper and has interesting properties, such as one-to-one invariance, consistent marginalization, and consistent sampling properties. An extensive simulation study is presented which suggests that the resulting 
Bayes estimates outperform the estimates 
obtained from the classical approach. 

The rest of the paper is organized as follows: in Section II we present the motivating examples that will be used throughout of this paper. Section III presents some basic concepts regarding counting processes, repairable systems and competing risks model and objective Bayesian inference. Section IV presents the minimal repair for competing risks model and the maximum likelihood estimators of the model parameters. In Section V, we investigated s
Some properties of the data are characterized and methods of estimation are studied, from a single system assuming the independence of failure modes. In Section Z we study some results when there is a lack of prior information. The case of NHPP with a power law intensity function is studied in detail with an objective Bayesian inference approach. Finally, in section W, we present the conclusion and ideas for future research.

\section{Motivating Data}

Table \ref{tablelouzadar} 
shows failure times and causes for a sugarcane harvester during a crop. 
This machine harvest an average of 20 tons of sugarcane per hour and its malfunction can lead to major losses.
It can fail due to either malfunction of electrical components, 
of the engine or of the elevator, which are 
denoted as Cause 1, 2 and 3 respectively in Table \ref{tablelouzadar}.
There are 10, 24 and 14 failures
for each of these causes. 
During the harvest time that comprehends 254 days the machine operates on a 7x24 regime. 
Therefore, we will assume that each  
repair is minimal (i.e.\ it leaves the machine at exactly the 
same condition it was right before it failed) and that 
data collection time truncated at T = 254 days.  
The recurrence of failure causes is shown in 
Figure \ref{figurea}.

\begin{table}[!h]
\caption{Failure data for a sugarcane harvester}
\centering
{\scriptsize
\begin{tabular}{r|c|r|c|r|c|r|c}
\hline \hline 
Time & Cause & Time & Cause & Time & Cause & Time & Cause \\ 
\hline
 4.987 & 1 & 7.374 & 1 & 15.716 & 1 & 15.850 & 2 \\
20.776 & 2 & 27.476 & 3 & 29.913 & 1 & 42.747 & 1 \\
47.774 & 2 & 52.722 & 2 & 58.501 & 2 & 65.258 & 1 \\
71.590 & 2 & 79.108 & 2 & 79.688 & 1 & 79.794 & 3 \\
80.886 & 3 & 85.526 & 2 & 91.878 & 2 & 93.541 & 3 \\
94.209 & 3 & 96.234 & 2 & 101.606 & 3 & 103.567 & 2 \\
117.981 & 2 & 120.442 & 1 & 120.769 & 3 & 123.322 & 3 \\
124.158 & 2 & 126.097 & 2 & 137.071 & 2 & 142.037 & 3 \\
150.342 & 2 & 150.467 & 2 & 161.743 & 2 & 161.950 & 2 \\
162.399 & 3 & 185.381 & 1 & 193.435 & 3 & 205.935 & 1 \\
206.310 & 2 & 210.767 & 3 & 212.982 & 2 & 216.284 & 2 \\
219.019 & 2 & 222.831 & 2 & 233.826 & 3 & 234.641 & 3 \\
  \hline \hline 
\end{tabular}}\label{tablelouzadar}
\end{table}

\begin{figure}[!h] 
	\centering
	\includegraphics[width=8cm]{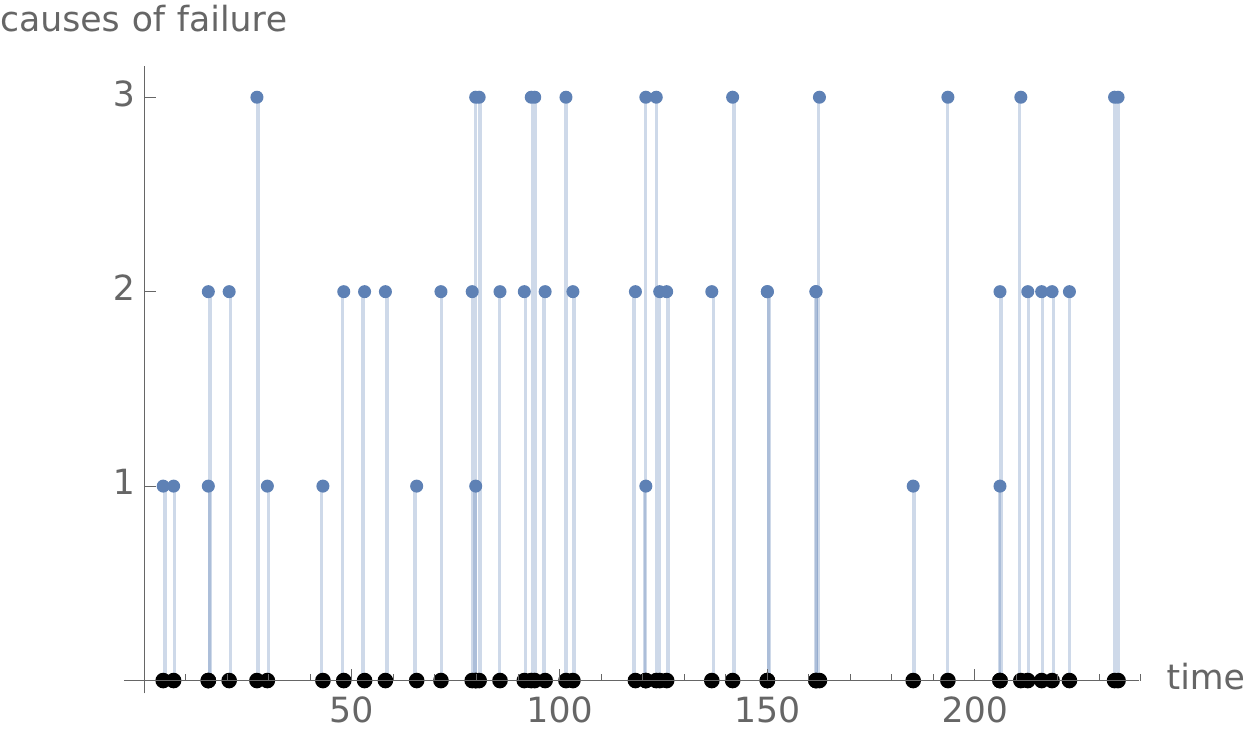}
	\caption{Recurrence of failures by cause and time. The black points on the x-axis indicate the system failures.  \label{figurea}}
\end{figure}

Our second data set was introduced by  
Somboonsavatdee and Sen \cite{somboonsavatdee2015statistical}.
It consists of warranty claims for a fleet of automobiles classified 
according to three possible causes.
The data describes the 372 recurrences of three causes of failures, hereafter, Cause 1, 2 and 3. 
The recurrence of failure causes can be seen in Figure \ref{figureb}. 
\begin{figure}[!h]
	\centering
	\includegraphics[width=9cm]{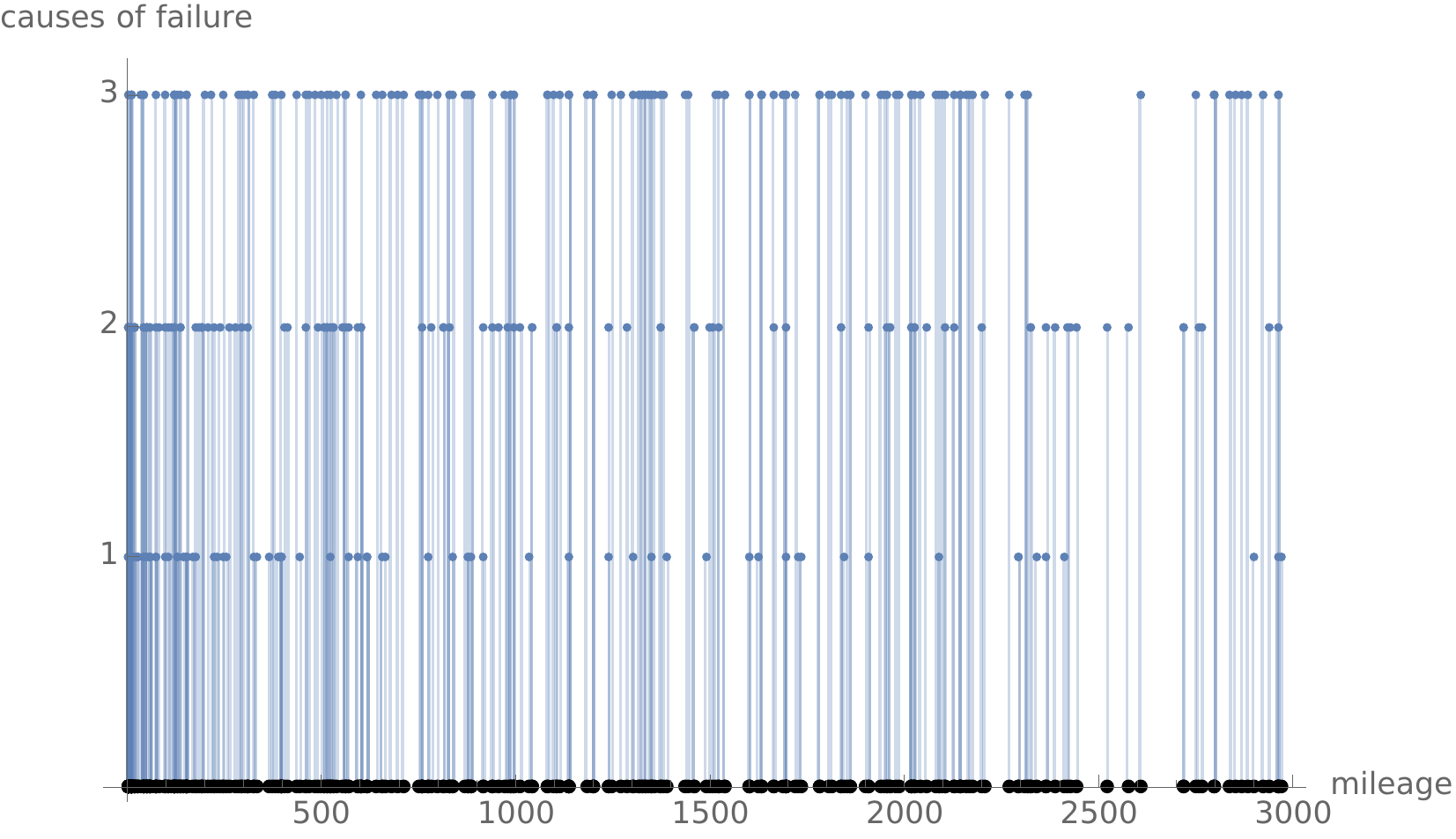}
	\caption{Recurrence of failures by cause and mileage. The black points on the x-axis indicates the system failures}\label{figureb}
\end{figure}	

Understanding the causes and the rate 
of accumulation of warranty claims very important in 
the automobile business. Warranty claim data can be used 
to estimate both the reliability of the vehicles in the field and 
the extent of future claim payments.
Hence, it will impact on 
costs, investment and security. 
For more details regarding a modern treatment of 
warranty claims please see \cite{xie2017two}, \cite{xie2016aggregate} and \cite{wang2017aggregate}.

\section{Background}

In this section, we discuss the analysis of the failures 
of a single unit repairable system under the action of 
multiple causes of recurrent and independent failures described by 
cause-specific intensity functions.

\subsection{The NHPP}
We begin by considering a repairable system 
with only one cause of failure. 
Let $N(t)$ be the number of failures before time $t$ and
$N(a,b] = N(b) - N(a)$ be the number of failures in the 
time interval $a < t \leq b$. 
An NHPP with intensity function $\lambda (t)$ ($t \geq 0$) 
is a counting process having independent increments 
and such that  
\begin{eqnarray}\label{equanhpp1}
\lambda (t)=\lim_{\Delta t \rightarrow 0} P(N(t, t+\Delta t]\geq 1) / \Delta t \,.
\end{eqnarray}
Alternatively, for each $t$, the random variate 
$N(t)$ follows a Poisson distribution with mean 
$\Lambda(t)=\int_{0}^{t} \lambda (s) ds$. 
A versatile parametric form for the intensity is
\begin{eqnarray}\label{equanhpp2}
\lambda (t)=(\beta / \mu) (t / \mu )^{\beta -1},
\end{eqnarray}
where $\mu, \beta>0$. In this case the NHPP is said 
to be a PLP and its mean function is 
\begin{eqnarray}\label{equanhpp3}
\Lambda(t)=E[N(t)]=\int_{0}^{t} \lambda (s) ds =(t / \mu )^{\beta}.
\end{eqnarray}
The scale parameter $\mu$ is the time for which we expect to observe a single event, while $\beta$ is the elasticity of the mean number of events with respect to time \cite{de2012bayesian}.

Since (\ref{equanhpp2}) is increasing (decreasing) in $t$
for $\beta > 1$ ($\beta < 1$), 
the PLP can accommodate both systems that deteriorate 
or improve with time. Of course, when $\beta =1$ the 
intensity (\ref{equanhpp2}) is constant and hence the PLP 
becomes a HPP.


\subsection{Competing Risks}
In reliability theory, the most common system configurations are the series systems, parallel systems, and series-parallel systems. 
In a series system, components are connected in such a way that the failure of a single component results in system failure 
(see Figure 3). 
A series system is also referred to as a competing risks system since the failure of a system can be classified as one of $p$ possible risks (componentS) that compete for the systems failure. 
\begin{figure}[!h]
	\centering
	\includegraphics[width=5cm]{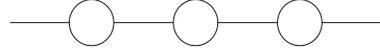}
	\caption{Diagram for a series system with three
components}
\end{figure}	

We note that we use the term {\em risks} before failure and 
{\em causes} afterward: 
the risks compete to be the cause. 
This paper is based on the classical assumption that the risks act independently. Thus, independent risks are equivalent to independent causes of failure.

A considerable amount of literature involving complex systems uses the assumption of stochastic independence in which is based on the physically independent functioning of components. 

Although competing risks is widely observed in medical studies, recent applications can bee seen in reliability. See, for instance,  
Crowder et. al \cite{crowder1994statistical} and references therein. 
For instance, an application can be seen when 
units under observation can experience any of several distinct failure causes, in which case for each unit we observe 
both the time to failure and the type of failure. 

\subsection{The Minimal Repair Model}

The main challenge when modelling repairable systems data 
is how to account for the effect of a repair action performed immediately 
after a failure. As usual, we will assume here that 
repair actions are instantaneous.
The most explored assumptions are either 
minimal repair and perfect repair. In the former, it is supposed 
that the repair action after a failure returns the system to 
exactly the same condition it was immediately before it failed. 
In the latter, the repair action leaves the system as if it were new. 
In the engineering literature this type of repair or of corrective 
maintenance are usually called ABAO and AGAN, for 
{\em as bad as old\/} and {as good as new\/} respectively
(\cite{barlow1960optimum, aven1983optimal, aven2000general,finkelstein2004minimal, mazzuchi1996bayesian}).
More sophisticated models which account for a repair 
action that leaves the system somewhere between the 
ABAO and AGAN conditions are possible, although they 
will not be considered here (see, for instance, 
\cite{doyen2004classes}).

Under minimal repair, the failure history of a repairable 
system is modelled as a NHPP. As mentioned above, 
the PLP (\ref{equanhpp2}) provides a flexible parametric 
form for the intensity of the process. 
Under the {\em time truncation} design, i.e.\ when failure data is 
collected up to time $T$, the likelihood becomes
\begin{eqnarray}
f(n, \boldsymbol{t}|\beta,\mu)=\frac{\beta^n}{\mu^{n\beta}} \left(\prod_{i=1}^{n} t_i \right)^{\beta -1}\hspace{-0.25cm}\exp\left[ - \left( \frac{T}{\mu}\right)^{\beta} \right],
\end{eqnarray}
where we assume that it has been observed $n \geq 1$ 
failures at times $t_1<t_2<\dotsc<t_n<T$ (see, for instance, 
Rigdon and Basu \cite{rigdon2000statistical}).  


Oliveira et al \cite{de2012bayesian}
suggest reparametrizing the model 
in terms of $\beta$ and $\alpha$, where
\begin{eqnarray}\label{reparam}
\alpha=E[N(T)]=(T/\mu)^{\beta},
\end{eqnarray}
so that the likelihood becomes
\begin{equation}\label{eqverplp}
\begin{aligned}
L(\beta, \alpha| n, \boldsymbol{t})&=c\left(\beta^n e^{n\beta/\hat{\beta}}\right)\left(\alpha^n e^{-\alpha}\right)  \\ &
\propto \gamma(\beta|n+1, n/\hat{\beta})\gamma(\alpha|n+1, 1)\,, 
\end{aligned}
\end{equation}
where $c=\prod_{i=1}^{n} t^{-1}_{i}$, $\hat{\beta}=n/\sum_{i=1}^{n} \log (T / t_i)$ is the maximum likelihood estimator of $\beta$ and $\gamma(x|a, b)=b^ax^{a-1}e^{-bx} / \Gamma(a) (x,a,b>0)$ is the PDF of the gamma distribution with shape and scale parameters $a$ and $b$, respectively. From (\ref{eqverplp}) it follows that $\beta$ and $\alpha$ are orthogonal (for the advantages of having orthogonal 
parameters see Cox and  Reid \cite{cox1987parameter}).

\subsection{Bayesian Inference}

The use of Bayesian methods has grown due to the advance of computational techniques. This approach is very attractive especially to construct credibility intervals for the parameters of the model. While in the classical approach the obtained intervals lie on the assumptions of asymptotic results, under the Bayesian approach such results can be obtained directly from the posterior density.

In this context, the prior distribution used to obtain the posterior quantities is of primary concern. Historical data or expert knowledge can be used to obtain a prior distribution. However, the elicitation process may be difficult and time consuming.
An alternative is to consider objective priors, in this case, we want to select a prior distribution in which the information provided by the data will not be obfuscated by subjective information.  Based on the above reasons, we focused on objective Bayesian analysis for the parameters of the cause-specific intensity functions. The formal rules to obtain such priors are discussed below.

\subsection{Jeffreys Prior}

Jeffreys \cite{jeffreys1961theory} proposed a rule for deriving a noninformative prior which is invariant to any one-to-one 
reparametrization. The Jeffreys' prior is one of the most popular objective priors and is proportional to the square root of the determinant of the expected Fisher information matrix $H(\boldsymbol\theta)$, i.e.,
\begin{equation*}
\pi^J(\boldsymbol\theta) \propto |H(\boldsymbol\theta)|^{1/2}. 
\end{equation*}

Although Jeffreys prior performs satisfactorily in one parameter cases, Jeffreys himself noticed that it may not adequate for the 
multi-parameter case. Indeed, such prior can lead to marginalization paradoxes and strong
inconsistencies (see Bernardo \cite[pg. 41]{bernardo2005reference}).

\subsection{Reference Prior}

Bernardo \cite{bernardo1979reference} introduced a class of objective priors known as reference priors. Such class of priors maximize the expected Kullback-Leibler divergence between the posterior distribution and the prior. The reference prior has minimal influence in a precise information-theoretic sense. that separated the parameters into the parameters of interest and nuisance parameters. To derive the reference prior function one need to set the parameters according to their order of inferential importance (see for instance, \cite{bernardo1979reference} and \cite{bernardo2005reference}). The main problem is that different ordering of the parameters return different priors and the selection of the more adequate prior may be quite challenging.

To overcome this problem Berger et al. \cite{berger2015overall} discussed different procedures to construct overall reference prior for all parameters. Additionally, under certain conditions, such prior is unique in the sense of being the same regardless the ordering of the parameters. To obtain such prior the expected Fisher information matrix must have a diagonal structure. The following result can be used to obtain the overall reference prior.

\begin{theorem}\label{theoveralpri} [Berger et al. \cite{berger2015overall}] Suppose that the Fisher information matrix of $\boldsymbol\theta$ is of the form
\begin{equation*}
H(\boldsymbol\theta)=\f{diag}(f_1(\theta_1)g_1(\boldsymbol\theta_{-1}),\ldots,f_k(\theta_m)g_k(\boldsymbol\theta_{-k})),
\end{equation*}
where $\boldsymbol\theta_{-i} = (\theta_1, \dotsb, \theta_{i-1}, \theta_{i+1}, \dotsb, \theta_k)$, $\f{diag}$ is a diagonal matrix, $f_i(\cdot)$ is a positive function of $\theta_i$ and $g_i(\cdot)$ is a positive function of $\boldsymbol\theta_i$, for $i=1, \dots, k$. Then, the reference prior, for any chosen interest parameter and any ordering of nuisance parameters, is given by
	\begin{equation}
	\pi^R (\boldsymbol\theta) \propto \sqrt{f_1(\theta_1) \ldots f_k(\theta_m)}\,.
	\end{equation}
\end{theorem}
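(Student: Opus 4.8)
The plan is to derive the stated prior directly from the sequential, one-parameter-at-a-time reference prior algorithm of Bernardo and Berger, and to exploit the diagonal structure of $H(\boldsymbol\theta)$ so that the dependence on the ordering disappears. First I would fix an \emph{arbitrary} ordering of the coordinates, say $(\theta_1,\ldots,\theta_k)$ from most to least important; proving the formula for a generic ordering simultaneously settles the ordering-independence claim. The crucial observation is that because $H(\boldsymbol\theta)$ is diagonal, its inverse and every leading/trailing submatrix that enters the algorithm is again diagonal, so no cross-terms are generated. Consequently the conditional asymptotic precision that governs $\theta_j$, given the more important $\theta_1,\ldots,\theta_{j-1}$ and after marginalizing the less important coordinates, is simply the diagonal entry $h_j(\boldsymbol\theta)=f_j(\theta_j)\,g_j(\boldsymbol\theta_{-j})$ itself. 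This is exactly the feature that makes the construction explicit.

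Next I would run the algorithm from the innermost coordinate outward. At stage $j$ the conditional reference prior is proportional to $\exp\{\tfrac12 E[\log h_j]\}$, the expectation being over the less important parameters $\theta_{j+1},\ldots,\theta_k$ under the conditional priors already determined. The key algebraic step is the factorization $\tfrac12\log h_j=\tfrac12\log f_j(\theta_j)+\tfrac12\log g_j(\boldsymbol\theta_{-j})$. The first summand contributes exactly $\sqrt{f_j(\theta_j)}$, whereas the second depends only on $\boldsymbol\theta_{-j}$; taking expectation over the less important coordinates turns it into a function of $\theta_1,\ldots,\theta_{j-1}$ alone, which is constant in $\theta_j$ and is therefore absorbed into the normalizing constant. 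Hence $\pi(\theta_j\mid\theta_1,\ldots,\theta_{j-1})\propto\sqrt{f_j(\theta_j)}$ for every $j$, and multiplying these conditionals telescopes to $\pi^R(\boldsymbol\theta)\propto\prod_{j=1}^{k}\sqrt{f_j(\theta_j)}=\sqrt{f_1(\theta_1)\cdots f_k(\theta_k)}$. Since the factors $g_j$ drop out at every stage regardless of which coordinate happens to be labelled most important, the identical expression is produced for every ordering and every choice of interest parameter, which is precisely the assertion.

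The main obstacle I anticipate is treating the impropriety of the priors rigorously rather than merely formally. The reference prior is defined through an increasing sequence of compact sets $\Theta_1\subset\Theta_2\subset\cdots$ exhausting the parameter space, the final prior being recovered as a limit of ratios of properly normalized restricted priors. The delicate point is to verify that on each $\Theta_n$ the normalizing constants carry the $g_j$ contributions and that these genuinely cancel in the limiting ratio, leaving only $\prod_j\sqrt{f_j(\theta_j)}$. This in turn requires confirming the integrability conditions that $\int\sqrt{f_j(\theta_j)}\,d\theta_j$ and the expectations $E[\log g_j\mid\cdot]$ are finite over each compact $\Theta_n$, so that the ``absorption into the normalizing constant'' performed above is legitimate and the limit exists. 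Once these conditions are checked, the formal cancellation is justified and the proof is complete.
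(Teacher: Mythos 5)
The paper does not actually prove this statement: it is quoted verbatim from Berger, Bernardo and Sun's work on overall objective priors and used as an imported tool, so there is no in-paper proof to compare against. Judged on its own merits, your reconstruction is essentially the standard argument (the one in Berger et al.\ and, earlier, Datta and Ghosh): for diagonal $H(\boldsymbol\theta)$ every block inversion in the Berger--Bernardo algorithm is trivial, so the stage-$j$ quantity is the diagonal entry $f_j(\theta_j)g_j(\boldsymbol\theta_{-j})$, the $\log$ factorizes, and the $g_j$ term is constant in $\theta_j$ after the inner expectation, hence absorbed into the conditional normalizer; the resulting product $\prod_j\sqrt{f_j(\theta_j)}$ is symmetric in the coordinates, which settles ordering- and interest-parameter-independence. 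Two points are worth making explicit. First, the claim that $E\bigl[\log g_j(\boldsymbol\theta_{-j})\mid\theta_1,\ldots,\theta_j\bigr]$ does not depend on $\theta_j$ needs the inductive fact that the already-determined inner conditionals $\pi(\theta_l\mid\theta_1,\ldots,\theta_{l-1})\propto\sqrt{f_l(\theta_l)}$ for $l>j$ are themselves free of $\theta_j$; this holds provided the truncating compacts are taken as product sets $\Theta^n=\prod_i\Theta_i^n$, an assumption you should state. Second, the cancellation of the $g_j$ contributions is actually exact at each stage on each compact (the factor $\exp\{\tfrac12 E^n[\log g_j]\}$ appears identically in numerator and normalizing integral), so the limiting-ratio step is less delicate than you suggest; only local integrability of $\sqrt{f_j}$ and finiteness of $E^n[\log g_j]$ on compacts are needed. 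With those caveats recorded, the proof is correct.
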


The reference posterior distribution has desirable theoretical properties such as invariance under one-to-one transformations of the parameters, consistency under marginalization and consistent sampling properties.

Therefore, instead of constructing a reference prior for each order of grouping of parameters, we find a strategy very well explored in Oliveira et al. \cite{de2012bayesian} and cited in some examples of Rigdon and Basu \cite{rigdon2000statistical}, which does a reparametrization using the Poisson process mean function. 



\subsection{Matching Priors} 

Researchers attempted to evaluate inferential procures with good coverage errors for the parameters. While the frequentist methods usually relies on asymptotic confidence intervals, under the Bayesian approach formal rules have been proposed to derive such estimators. Tibshirani \cite{tibshirani1989} discussed sufficient conditions to derive a class of non-informative priors $\pi(\theta_1, \theta_2)$ where $\theta_1$ is the parameter of interest so that the credible interval for $\theta_1$ has \textcolor[rgb]{0,0,0}{a} coverage error $O$($n^{-1}$) in the frequentist sense, i.e.,
\begin{equation}\label{matchingp}
P\left[\theta_1\leq\theta_1^{1-a}(\pi;X)|(\theta_1,\theta_2)\right]=1-a-O(n^{-1}),
\end{equation}
where $\theta_1^{1-a}(\pi;X)|(\theta_1,\theta_2)$ denote the $(1-a)$th quantile of the posterior distribution of $\theta_1$. The class of priors  satisfying (\ref{matchingp}) are known as matching priors \cite{datta2012probability}.

To obtain such priors, Tibshirani \cite{tibshirani1989} proposed to reparametrize the model in terms of the orthogonal parameters $(\omega, \zeta)$ in the sense discussed by Cox
\& Reid \cite{cox1987parameter}. That is, $I_{\omega,\zeta}(\omega, \zeta)=0$ for all $(\omega, \zeta)$, where $\omega$ is the parameter of interest and $\zeta$ is the orthogonal nuisance parameter. In this case, the matching priors are all priors of the form
\begin{equation}\label{matchingpt}
\pi(\omega, \zeta)=g(\zeta)\sqrt{I_{\omega\omega}(\omega,\zeta)},
\end{equation}
where $g$($\zeta$)$>0$ is an arbitrary function and
$I_{\omega\omega}(\omega$, $\zeta$) is the $\omega$ entry of the Fisher
Information Matrix. The same idea is applied to derive priors when there are a vector of nuisance parameters. In the present study, we considered an orthogonal reparametrization in order to obtain priors with matching priors properties.

\subsection{Bayesian Point Estimators} 

There are different types of Bayesian estimators, 
the three most commonly used are the posterior mean, the posterior 
mode and the posterior median. Here we considered the posterior mode, that is usually refer as MAP estimate, due to structure that has a simple closed-form expression and can be rewritten as a
bias corrected MLE. A similar approach considering MAP estimates has been considered by Ramos et al. \cite{ramos2016efficient} to derive nearly unbiased estimator for the Nakagami distribution. One can define a maximum  a posteriori  estimator, $\hat{\boldsymbol\theta}^{MAP}$, which  is  given  by  maximizing the  posterior distribution, i.e.,
\begin{equation}
\begin{aligned}
\hat{\boldsymbol\theta}^{MAP} &= \operatorname*{arg\,max}_{\boldsymbol\theta} \pi(\boldsymbol\theta|\boldsymbol t) \\
&= \operatorname*{arg\,max}_{\boldsymbol\theta} \prod_{t_i \in \chi} f(t_i | \boldsymbol\theta) \pi(\boldsymbol\theta)  \\
&= \operatorname*{arg\,max}_\theta \left( \log( \pi(\boldsymbol\theta) ) + \sum_{t_i \in \mathcal{T}} \log( f(t_i | \boldsymbol\theta) )  \right).
\end{aligned}
\end{equation}

\section{Modeling Minimal Repair with Competing Risks}

The assumptions of the repairable system under examination is that the components can perform different operations, and thus be subject to different types of failures. Hence, in our model there are $p$ causes of failure. At each failure, the cause of failure (or failure mode) is denoted by $\delta_i=j $ for $j = 1, 2, \dots, p$.  If $n$ failures have been observed in $(0, T]$, then we observe the data $(t_1, \delta_1), \dots, (t_n, \delta_p)$, where $0 <t_1 <\dots < t_n < T$ are the system failure times and the $ \delta_i$s represent the associated failure cause with $i$-th failure time. 

One can introduce a counting process $N_j$ whose behavior is associated with the cause-specific intensity function
\begin{equation}
\lambda_j (t) = \lim_{\Delta t \rightarrow 0} \frac{P(\delta_i=j, N(t+\Delta t]\geq 1)}{\Delta t}.
\end{equation}
Hence, the global system failure process 
$N (t) =\sum_{j=1}^{p} N_j (t)$ is a superposition of NHPPs 
and its intensity is
\begin{equation}
\lambda (t) = \sum_{j=1}^{p} \lambda_j (t).
\end{equation}
The cause-specific and the system cumulative intensities are respectively
	\begin{equation}\label{intensCum}
	\Lambda_j (t)=\int_{0}^{t} \lambda_j(u) du \ \ \hbox{ and } \ \ \Lambda (t)= \sum_{j} \Lambda_j(t) ,
	\end{equation}
where we use the PLP intensity (\ref{equanhpp2}) 
for each cause, so that 
\begin{equation}
	\lambda_j (t)=\left( \frac{\beta_j}{\mu_j} \right) 
\left( \frac{t}{\mu_j}\right)^{\beta_j -1}
\end{equation}
for $j=1, \dots, p$ and
\begin{equation}\label{intensCum2}
\Lambda (t)= \sum_{j=1}^{p} \left( \frac{t}{\mu_j} \right)^{\beta_j}.
\end{equation}

\subsection{Maximum Likelihood Estimation}
Recalling that causes of failure act independently and they are mutually exclusive, so the general form of the likelihood function can be written as
\begin{equation}
\begin{aligned}
L(\boldsymbol\theta|\boldsymbol{t}) &= \prod_{i=1}^{n} \prod_{j=1}^{p} \{\lambda_j (t_i)\}^{\mathbb{I} ( \delta_i = j)} e^{ - \Lambda_j (t) } \\
&= \prod_{i=1}^{n} \{\lambda_1 (t_i)\}^{\mathbb{I} ( \delta_i = 1)} e^{ - \Lambda_1 (t) } \times \dots \times \\
& \quad  \{\lambda_p (t_i)\}^{\mathbb{I} ( \delta_i = p)} e^{ - \Lambda_p (t) }  \\
&= L_1 \times \dots \times L_p, 
\end{aligned}
\end{equation}	
where $\mathbb{I} ( \delta_i = j)$ represents the indicator function of the cause $j$ associated with $i-$th time of failure and 
$\boldsymbol\theta = (\mu_1 , \beta_1 , \ldots , \mu_p , \beta_p)$.	

To better understand how to compute the likelihood, we consider 
the following cases.

\subsection*{Case $p=2$, $\beta_1=\beta_2$}

Initially, we obtain the MLEs with only two independent failure causes, $j=1, 2$ and $\beta_1 = \beta_2 = \beta$ in which the system has been observed until a  fixed time $T$ . Resulting in
\begin{equation}\label{veros}
\begin{aligned}
L(\boldsymbol\theta|\boldsymbol{t}) &= \frac{ \beta^{n} }{ \mu_{1}^{n_1 \beta} \mu_{2}^{n_2 \beta} }   \left[ \prod_{}^{n_1} t_i  \prod_{}^{n_2} t_i \right]^{\beta - 1}\times \\
& \ \ \ \exp \left\lbrace - \left( \frac{T}{\mu_1} \right)^{\beta} - \left( \frac{T}{\mu_2} \right)^{\beta}  \right\rbrace,
\end{aligned}
\end{equation}	
where $\prod^{n_j} t_i = \prod_{i=1}^{n} t_i^{ \mathbb{I} (\delta_i = j)}$; $\sum_{i=1}^{n} \mathbb{I} ( \delta_i = j) = n_j $; $n = \sum_{j=1}^{p} n_j$ and $\boldsymbol\theta = (\beta, \mu_j)$.
The MLEs are
	\begin{equation}\label{mlebetaeq}
	\hat{\beta}=\frac{n}{\sum_{i=1}^{n} \log (T / t_i)} \quad \hbox{and} \quad \hat{\mu_j}=\frac{T}{n_{j}^{1/ \hat{\beta}}}.
	\end{equation}

\subsection*{Case $p=2$, $\beta_1\neq\beta_2$}

For the case of the different shape parameters, $\beta_1 \neq \beta_2$, the system intensity function is no longer a PLP. The likelihood function is given by
\begin{equation}
\begin{aligned}
L(\boldsymbol\theta|\boldsymbol{t}) &= \frac{\beta_1^{n_1} \beta_2^{n_2}}{\mu_{1}^{n_1 \beta_1} \mu_{2}^{n_2 \beta_2}}   \left[ \prod_{}^{n_1} t_i \right]^{\beta_1 - 1}  \left[ \prod_{}^{n_2} t_i \right]^{\beta_2 - 1} \\
& \quad \exp \left\lbrace - \left( \frac{T}{\mu_1} \right)^{\beta_1} - \left( \frac{T}{\mu_2} \right)^{\beta_2}  \right\rbrace,
\end{aligned}
\end{equation}	
where $\boldsymbol\theta = (\beta_1, \beta_2,\mu_1,\mu_2)$, and the MLEs are
\begin{equation}
	\hat{\beta_j}=\frac{n_j}{\sum_{i=1}^{n} \log (T / t_i) \mathbb{I} (\delta_i = j)} \quad \hbox{and} \quad \hat{\mu_j}=\frac{T}{n_{j}^{1/ \hat{\beta_j}}}.
	\end{equation}
Note that in the MLEs only exist if $n_j\geq1$ for $j=1,2$.

\subsection*{Case $p > 2$, at least two $\beta_j$s are different}

The likelihood function is given by
\begin{equation}
\begin{aligned}
L(\boldsymbol\theta|\boldsymbol{t}) &=  \prod_{j=1}^{p} \frac{\beta_j^{n_j}}{\mu_{j}^{n_j \beta_j}}   \left[ \prod_{}^{n_j} t_i \right]^{\beta_j - 1} \hspace{-0.4cm}\exp \left\{ -\sum_{j=1}^{p} \left( \frac{T}{\mu_j} \right)^{\beta_j}  \right\}
\end{aligned}
\end{equation}	
and the MLEs are
\begin{equation}
	\hat{\beta_j}=\frac{n_j}{\sum_{i=1}^{n} \log (T / t_i) \mathbb{I} (\delta_i = j)} \quad \hbox{and} \quad \hat{\mu_j}=\frac{T}{n_{j}^{1/ \hat{\beta_j}}}.
	\end{equation}
where again the MLEs exist only if $n_j\geq1$ for $j=1,2,\ldots,p$.

\section{Objective Bayesian Inference for the Model}

In this section, we present an objective Bayesian inference for the framework discussed so far by considering the reparametrization given in (\ref{reparam}) in order to obtain an orthogonal structure in the Fisher information matrix, and as a result, a unique objective prior.

\subsection*{Case $p =2$, $\beta_1 = \beta_2$}

Denote by $\beta$ the common value of $\beta_1$ and $\beta_2$.
The likelihood function considering the reparametrization is given by 
\begin{equation}\label{vercase2eq}
\begin{aligned}
L(\boldsymbol\theta) & =  \beta^{n} \left( T \alpha_1^{ - \frac{1}{\beta}} \right)^{-n_1 \beta} \left( T \alpha_2^{ - \frac{1}{\beta}} \right)^{-n_2 \beta}\times \\
& \quad \left[ \prod_{}^{n_1} t_i \prod_{}^{n_2} t_i \right]^{\beta - 1} e^{- \alpha_1 - \alpha_2 } \\
&\propto \gamma(\beta | n{+}1, n / \hat{\beta}) \prod_{j=1}^{2} \gamma( \alpha_j | n_j {+} 1, 1), 
\end{aligned}
\end{equation}
where now $\boldsymbol\theta=(\beta, \alpha_1, \alpha_2)$. The log-likelihood $\ell (\boldsymbol\theta)=log L(\boldsymbol\theta)$ is given by
\begin{eqnarray}
\ell (\boldsymbol\theta) \propto  n\log(\beta) - n\beta \log(T) + \beta \sum_{i=1}^{n} \log( t_i ) + \nonumber \\
+ n_1 \log( \alpha_1 ) + n_2 \log( \alpha_2 ) - \alpha_1 - \alpha_2. \nonumber
\end{eqnarray} 

The MLE for $\beta$ is the same as presented in (\ref{mlebetaeq}).
On the other hand, the MLEs for $\alpha_j$ are $\hat {\alpha}_{j}  = n_j$. 
To compute the Fisher Information Matrix, note that 
the partial derivatives are

$$\frac{\partial \ell}{\partial \beta} = n/ \beta - n\log(T) + \sum_{i=1}^{n} \log( t_i ),$$

$$\frac{\partial \ell}{\partial \alpha_j} = \frac{n_j}{\alpha_j} - 1,$$

$$\frac{\partial^2 \ell}{\partial \beta^2} = -n\beta^{-2},$$

$$\frac{\partial^2 \ell}{\partial \alpha_{j}^{2}} = \frac{-n_j}{\alpha_{j}^{2}} \ \ \ \mbox{and} $$

$$\frac{\partial^{2} \ell } {\partial \beta \partial \alpha_1 } = \frac{\partial^{2} \ell } {\partial \beta \partial \alpha_2 } = \frac{\partial^{2} \ell } {\partial \alpha_1 \partial \alpha_2 } = 0
$$
(note that, since we are considering time truncation, 
both $n = \sum_{j=1}^p n_j$ and the $n_j$'s are random). 
Hence, the expectation of the second derivatives above are given by
\begin{equation*}
-E\left(\frac{\partial^2 \ell}{\partial \beta^2}\right) = \frac{(\alpha_1 + \alpha_2)}{ \beta^{2}} \quad \mbox{and} \quad -E\left(\frac{\partial^2 \ell}{\partial \alpha_{i}^{2}}\right) = \frac{1}{\alpha_{j}}\cdot
\end{equation*}

The Fisher information matrix is diagonal and given by
$$
H(\boldsymbol\theta)= \left[
\begin{array}{ccc}
(\alpha_1 + \alpha_2) /\beta^{2}    & 0    &  0   \\
&     &     \\ 
0    &    1 / \alpha_{1}   & 0    \\
&     &        \\ 
0 & 0   &  1 / \alpha_{2}   \\  
\end{array}
\right].
$$

The first step in this approach begins by obtaining the Jeffreys prior distribution which is given by
\begin{equation}\label{jeffreyspcase2}
\pi^{J} (\boldsymbol\theta) \propto \frac{1}{\beta} \sqrt{\frac{\alpha_1 + \alpha_2}{\alpha_1 \alpha_2} }.
\end{equation}

\begin{proposition} The Jeffreys prior (\ref{jeffreyspcase2}) is a 
matching prior for $\beta$. 
\end{proposition}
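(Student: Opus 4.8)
The plan is to verify directly that the Jeffreys prior (\ref{jeffreyspcase2}) has the functional form (\ref{matchingpt}) that Tibshirani's criterion identifies as matching, taking $\beta$ as the interest parameter and $(\alpha_1,\alpha_2)$ as the nuisance vector. The crucial observation that makes this painless is that the reparametrization in terms of $(\beta,\alpha_1,\alpha_2)$ has \emph{already} produced a diagonal Fisher information matrix $H(\boldsymbol\theta)$. In particular the off-diagonal entries $I_{\beta\alpha_1}$ and $I_{\beta\alpha_2}$ vanish, so $\beta$ is orthogonal to the nuisance parameters in the Cox--Reid sense required by Tibshirani. Thus no further reparametrization is needed before applying (\ref{matchingpt}); I would state this orthogonality explicitly as the first step.

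Next I would read off the relevant Fisher information entry. From the diagonal of $H(\boldsymbol\theta)$ we have
\begin{equation*}
I_{\beta\beta}(\beta,\alpha_1,\alpha_2)=\frac{\alpha_1+\alpha_2}{\beta^{2}},
\qquad\text{so}\qquad
\sqrt{I_{\beta\beta}}=\frac{\sqrt{\alpha_1+\alpha_2}}{\beta}.
\end{equation*}
The second step is then a one-line algebraic rewriting of (\ref{jeffreyspcase2}): I would factor
\begin{equation*}
\pi^{J}(\boldsymbol\theta)\propto\frac{1}{\beta}\sqrt{\frac{\alpha_1+\alpha_2}{\alpha_1\alpha_2}}
=\frac{1}{\sqrt{\alpha_1\alpha_2}}\cdot\frac{\sqrt{\alpha_1+\alpha_2}}{\beta}
=g(\alpha_1,\alpha_2)\,\sqrt{I_{\beta\beta}(\beta,\alpha_1,\alpha_2)},
\end{equation*}
where $g(\alpha_1,\alpha_2)=1/\sqrt{\alpha_1\alpha_2}$. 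Since $g$ is strictly positive on the parameter space and depends only on the nuisance coordinates, $\pi^{J}$ is exactly of the form (\ref{matchingpt}), and Tibshirani's theorem then yields the desired $O(n^{-1})$ coverage for $\beta$.

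The only point demanding a word of justification, rather than pure computation, is that Tibshirani's form (\ref{matchingpt}) is stated for a scalar nuisance parameter $\zeta$, whereas here the nuisance is the pair $(\alpha_1,\alpha_2)$. I would invoke the vector-nuisance extension already noted after (\ref{matchingpt}) in the text, observing that full diagonality of $H(\boldsymbol\theta)$ makes $\beta$ orthogonal to the entire nuisance block simultaneously, which is precisely the hypothesis under which the extension applies. This is the step I expect to be the main (and essentially the only) obstacle; everything else is immediate from the explicit diagonal $H(\boldsymbol\theta)$ computed just above the proposition, so the argument should close in a few lines.
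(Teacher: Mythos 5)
Your proposal is correct and follows essentially the same route as the paper's own proof: both exploit the diagonality of the Fisher information matrix in the $(\beta,\alpha_1,\alpha_2)$ parametrization to write the Jeffreys prior in the Tibshirani form (\ref{matchingpt}) with $g(\alpha_1,\alpha_2)=1/\sqrt{\alpha_1\alpha_2}$ and $\sqrt{I_{\beta\beta}}=\sqrt{\alpha_1+\alpha_2}/\beta$. In fact your write-up is more complete than the paper's, which stops after asserting the factorization without exhibiting $g$ explicitly or addressing the vector-nuisance extension.
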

\begin{proof} Let $\beta$ be the parameter of interest and
denote by $\boldsymbol\zeta=(\alpha_1,\alpha_2)$ the nuisance parameter. First, since the information matrix is diagonal, the 
Jeffreys' prior can be written in the form (\ref{matchingpt}) 
\end{proof}

The joint posterior distribution for $\beta$ and $\alpha_j$ produced by Jeffreys' prior is proportional to the product of the likelihood function (\ref{vercase2eq}) and the prior distribution (\ref{jeffreyspcase2}) resulting in
\begin{equation}\label{jeffreyspostcase2}
\begin{aligned}
\pi^J(\beta, \alpha_1, \alpha_2 | \boldsymbol t) \propto  & \sqrt{\alpha_1+\alpha_2}\left[\beta^{n-1} e^{-n\beta / \hat{\beta} }\right] \times  \\
&    \left[\alpha_1^{n_1 - 1/2} e^{-\alpha_1}\right]\left[\alpha_2^{n_2 - 1/2} e^{- \alpha_2}\right].
\end{aligned}
\end{equation}

The posterior (\ref{jeffreyspostcase2}) do not have closed-form, this implies that it may be improper, which is undesirable. Moreover, to obtain the necessary credibility intervals we would have to resort to Monte Carlo methods. To overcome these problems, we propose the alternative 
reference prior described below.

 based on .

Note that, if in Theorem \ref{theoveralpri} 
we take $ f_1(\beta) = \beta^{-2} $, $g_1(\alpha_1, \alpha_2)= (\alpha_1 + \alpha_2) $, $f_2(\alpha_1) = \alpha_{1}^{-1} $, $g_2(\beta, \alpha_2)= 1 $, $ f_3(\alpha_2) = \alpha_{2}^{-1} $ and $g_3(\beta, \alpha_1)= 1 $, the overall reference prior is 
\begin{equation}\label{refpcase2}
\pi^{R} ( \beta, \boldsymbol\alpha) \propto \frac{1}{\beta}\sqrt{\frac{1}{\alpha_1 \alpha_2}} .
\end{equation}

\begin{proposition}\label{overreprofmat} The overall reference prior (\ref{refpcase2}) is a matching prior for all the parameters. 
\end{proposition}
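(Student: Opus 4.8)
The plan is to invoke Tibshirani's characterisation of matching priors, equation (\ref{matchingpt}), in its vector-nuisance form, and to exploit the fact that the Fisher information matrix $H(\boldsymbol\theta)$ displayed above is diagonal. First I would recall that (\ref{matchingpt}) identifies, for a scalar interest parameter $\omega$ orthogonal to the nuisance parameter $\zeta$, every prior of the shape $g(\zeta)\sqrt{I_{\omega\omega}(\omega,\zeta)}$ with $g>0$ as a first-order matching prior; the same criterion extends to a vector $\boldsymbol\zeta$ of nuisance parameters provided $\omega$ is orthogonal to each component of $\boldsymbol\zeta$, i.e.\ the off-diagonal Fisher entries linking $\omega$ to the $\zeta_i$ vanish. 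Since $H(\boldsymbol\theta)$ is diagonal, any one of $\beta,\alpha_1,\alpha_2$ is automatically orthogonal to the remaining two, so this vector-nuisance criterion is available for every choice of interest parameter.

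The core of the argument is then a direct factorisation check, carried out three times. Taking $\omega=\beta$ and $\boldsymbol\zeta=(\alpha_1,\alpha_2)$, the relevant diagonal entry is $I_{\beta\beta}=(\alpha_1+\alpha_2)/\beta^2$, so $\sqrt{I_{\beta\beta}}=\sqrt{\alpha_1+\alpha_2}/\beta$, and I would exhibit
\[
\pi^R(\beta,\alpha_1,\alpha_2)=\frac{1}{\beta}\frac{1}{\sqrt{\alpha_1\alpha_2}}
= \underbrace{\frac{1}{\sqrt{\alpha_1\alpha_2(\alpha_1+\alpha_2)}}}_{g(\alpha_1,\alpha_2)}\,\sqrt{I_{\beta\beta}},
\]
where $g>0$ depends only on the nuisance parameters. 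For $\omega=\alpha_1$ with $\boldsymbol\zeta=(\beta,\alpha_2)$ one has $\sqrt{I_{\alpha_1\alpha_1}}=\alpha_1^{-1/2}$ and $\pi^R=\big(\beta^{-1}\alpha_2^{-1/2}\big)\,\alpha_1^{-1/2}$, so that $g(\beta,\alpha_2)=\beta^{-1}\alpha_2^{-1/2}>0$; the case $\omega=\alpha_2$ follows verbatim by the symmetry $\alpha_1\leftrightarrow\alpha_2$. In each case $\pi^R$ has exactly the form (\ref{matchingpt}), which yields the claim.

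The only genuinely delicate point, and the step I would be most careful about, is the passage from the single-nuisance statement (\ref{matchingpt}) to the vector-nuisance setting needed for $\omega=\beta$, where two nuisance parameters are present. I would justify this by noting that the diagonality of $H(\boldsymbol\theta)$ gives full mutual orthogonality, so the interest parameter is orthogonal to the entire nuisance block and Tibshirani's first-order matching condition reduces, exactly as in the scalar case, to the requirement that the prior factor as an arbitrary positive function of the nuisance coordinates times $\sqrt{I_{\omega\omega}}$. Everything else is the routine verification above, and the three factorisations together establish that $\pi^R$ is a matching prior for each of $\beta$, $\alpha_1$ and $\alpha_2$.
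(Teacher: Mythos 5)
Your argument is correct and follows essentially the same route as the paper: both verify Tibshirani's factorisation (\ref{matchingpt}) for each choice of interest parameter, using the diagonality of $H(\boldsymbol\theta)$ to justify orthogonality to the nuisance block, with the same functions $g$ (your $g(\beta,\alpha_2)=\beta^{-1}\alpha_2^{-1/2}$ for the $\alpha_1$ case even corrects a small typo in the paper's stated $g$). Your explicit remark on extending (\ref{matchingpt}) to a vector of nuisance parameters is the same point the paper handles by citation to Tibshirani, so nothing substantive differs.
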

\begin{proof} If $\beta$ is the parameter of interest and $\zeta=(\alpha_1,\alpha_2)$, then the proof is analogous to 
that for the Jeffreys' prior above but considering $g(\boldsymbol\zeta)=\frac{1}{\sqrt{(\alpha_1+\alpha_2)\alpha_1\alpha_2}}$. 
If $\alpha_1$  is the parameter of interest and $\boldsymbol\zeta=(\beta,\alpha_2)$ are the nuisance parameters. Then, as $H_{\alpha_1,\alpha_1}(\alpha_1, \boldsymbol\zeta)=\dfrac{1}{\alpha_1}$ and $g(\boldsymbol\zeta)=\frac{1}{\beta\sqrt{\alpha_1}}$. Hence, the overall reference prior (\ref{refpcase2}) can be written in the form (\ref{matchingpt}). The case that $\alpha_2$ is the 
parameter of interest is similar. 
\end{proof}

The posterior distribution when using the 
overall reference prior (\ref{refpcase2}) is 
\begin{equation}
\begin{aligned}
\pi^R(\beta, \alpha_1, \alpha_2 | \boldsymbol t) \propto & \left[\beta^{n-1} e^{-n\beta / \hat{\beta} }\right]   \left[\alpha_1^{n_1 - 1/2} e^{-\alpha_1}\right]\times  \\
&  \left[\alpha_2^{n_2 - 1/2} e^{- \alpha_2}\right],
\end{aligned}
\end{equation}
that is,
\begin{equation}\label{postoverallbeq}
\pi^R(\beta, \alpha_1, \alpha_2 | \boldsymbol t) \propto  \gamma(\beta | n, n / \hat{\beta}) \prod_{j=1}^{2} \gamma( \alpha_j | n_j + 1/2, 1),
\end{equation}
which is the product of independent gamma densities. 
Clearly,  
if there is at least one failure for each cause, this posterior is proper.

The marginal posterior distributions are given by
\begin{equation}
\pi(\beta | \boldsymbol t) \propto \left[\beta^{n-1} e^{-n\beta / \hat{\beta} }
\right] \sim \gamma(\beta | n, n / \hat{\beta}), 
\end{equation}
and
\begin{equation}\label{margovbeq}
\pi( \alpha_i | \boldsymbol t) \propto \left[\alpha_i^{n_i - 1/2} e^{-\alpha_i}\right] \sim \gamma( \alpha_i | n_i + 1/2, 1).  
\end{equation}
Note that, as was proved in Proposition \ref{overreprofmat}, the marginal posterior intervals have accurate frequentist coverage for all parameters. 

From the posterior marginal
the Bayes estimator using the MAP for $\beta$ is given by
\begin{equation}
\hat{\beta}^{Bayes} = \left(\frac{n-1}{n}\right)\hat{\beta}^{MLE}.
\end{equation}

Throughout the rest of the paper we will denote 
$\hat{\boldsymbol\theta}^{Bayes}$  the Bayes estimators of $\boldsymbol\theta$. Rigdon and Basu \cite{rigdon2000statistical} argued that 
\begin{equation}\label{cmleeq}
E\left[\left(\frac{n-1}{n}\right)\hat{\beta}^{MLE} \bigg|\, n\right] = \beta,
\end{equation}
i.e., the proposed estimator is unbiased. Although they called CMLE, there is, however, no theoretical justification for its derivation. Here we provided a natural approach to obtain unbiased estimators for $\beta$ by considering the reference posterior.

In the case of Bayes estimators for $\alpha_j$, note that,
\begin{equation}\label{postjc2d}
\hat\alpha_j^{MEAN}= n_j +\frac{1}{2} \quad \mbox{and} \quad \hat\alpha_j^{MAP}= n_j-\frac{1}{2}.
\end{equation}

These estimators are biased specially when $n_j$ are small. On the other hand
\begin{equation}\label{postjc2e}
n_j-\frac{1}{2} <  n_j < n_j+\frac{1}{2}.
\end{equation}

Since $E(n_j)=\alpha_j$, this implies that as $n_j$ increase, $\hat\alpha^{MEAN}\approx\hat\alpha^{MAP}\approx n_j$. Therefore, as a Bayes estimator, we choose the unbiased estimator
\begin{equation}
\qquad \hat\alpha_j^{Bayes}=n_j, \qquad j=1,2.
\end{equation}

Ramos et al. \cite{ramos2018} presented a similar approach to obtain Bayes estimators for another distribution. It is noteworthy that the credibility interval must be evaluated considering the quantile function of the $\gamma( \alpha_i | n_i + 1/2, 1)$ to satisfy the matching prior properties.

\subsection*{Case $\beta_1 \neq \beta_2$}

In this case we substitute (\ref{reparam}) in (\ref{veros}) 
too obtain the likelihood function
\begin{equation}
\begin{aligned}
L(\boldsymbol\theta) &= c \left[ \beta_1^{n_1} e^{-n_1\beta_1/\hat{\beta_1}} \right] \left[ \beta_2^{n_2} e^{-n_2\beta_2/\hat{\beta_2}} \right] \times  \\ 
& \quad \ \left[ e^{-\alpha_1} \alpha_1^{n_1} \right]  \left[ e^{-\alpha_2} \alpha_2^{n_2} \right]\\
&\propto  \gamma(\beta_1 | n_1 + 1, n_1/{\hat{\beta_1}}) \gamma(\beta_2 | n_2 + 1, n_2/{\hat{\beta_2}}) \times  \\
& \ \ \ \ \gamma(\alpha_1 | n_1 + 1, 1) \gamma(\alpha_2 | n_2 + 1, 1),
\end{aligned}
\end{equation}
where $\boldsymbol\theta = (\beta_1, \beta_2, \alpha_1, \alpha_2)$ and $c=\left(\prod^{n_1} t_i \prod^{n_2} t_i\right)^{-1}.$ 
The MLEs have explicit solutions
\begin{equation}
\hat{\beta}^{MLE}_j =   \frac{n_j}{\sum_{i=1}^{n} \log (T / t_i) (\mathbb{I} ( \delta_i = j))} \quad  \hbox{and}  \quad \hat{\alpha}^{MLE}_{j} = n_j.
\end{equation}

Since $E[N_j(T)]=\alpha_j$, for $j=1,2$, the Fisher information matrix is
$$
H(\boldsymbol\theta)=\left[
\begin{array}{cccc}
\alpha_1  \beta_{1}^{-2} & 0 & 0 & 0 \\ 
&     &     &    \\
0 & \alpha_2  \beta_{2}^{-2} & 0 & 0 \\ 
&     &     &    \\  
0 & 0 & \alpha_{1}^{-1} & 0 \\ 

0 & 0 & 0 & \alpha_{2}^{-1} \\ 

\end{array}
\right].
$$ 

The Jeffreys prior is given by
\begin{equation}\label{priorjec2}
\pi^{J} (\boldsymbol\theta) \propto \frac{1}{\beta_1 \beta_2}.
\end{equation}

\begin{proposition}\label{propsjo2c} The Jeffreys prior (\ref{priorjec2}) is matching prior for $\beta_1$ and $\beta_2$. 
\end{proposition}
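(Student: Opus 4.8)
The plan is to verify Tibshirani's criterion (\ref{matchingpt}) directly, exactly in the spirit of the proofs of the preceding propositions (cf.\ Proposition \ref{overreprofmat}). The crucial observation is that the Fisher information matrix $H(\boldsymbol\theta)$ displayed just above is diagonal, so the four parameters are already mutually orthogonal in the Cox--Reid sense and no further reparametrization is needed: each parameter is orthogonal to the remaining three. It therefore suffices, for each interest parameter $\beta_j$, to exhibit a positive function $g$ of the nuisance parameters alone such that the Jeffreys prior (\ref{priorjec2}) factors as $g(\boldsymbol\zeta)\sqrt{I_{\beta_j\beta_j}}$.

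First I would take $\beta_1$ as the parameter of interest, with nuisance vector $\boldsymbol\zeta=(\beta_2,\alpha_1,\alpha_2)$. Reading the $(1,1)$ entry off $H(\boldsymbol\theta)$ gives $I_{\beta_1\beta_1}=\alpha_1\beta_1^{-2}$, hence $\sqrt{I_{\beta_1\beta_1}}=\sqrt{\alpha_1}/\beta_1$. Writing $\pi^J(\boldsymbol\theta)=1/(\beta_1\beta_2)=g(\boldsymbol\zeta)\,\sqrt{\alpha_1}/\beta_1$ forces $g(\boldsymbol\zeta)=1/(\beta_2\sqrt{\alpha_1})$, which is positive and, crucially, free of $\beta_1$. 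Thus the Jeffreys prior has the matching form (\ref{matchingpt}) for $\beta_1$. The argument for $\beta_2$ is identical by symmetry: with $\boldsymbol\zeta=(\beta_1,\alpha_1,\alpha_2)$ and $I_{\beta_2\beta_2}=\alpha_2\beta_2^{-2}$, one takes $g(\boldsymbol\zeta)=1/(\beta_1\sqrt{\alpha_2})$.

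There is no genuine obstacle here --- the result is a direct consequence of the diagonal Fisher information. The only points requiring care are (i) confirming orthogonality for the \emph{entire} vector of nuisance parameters, not merely a single one, which is immediate since every off-diagonal entry of $H(\boldsymbol\theta)$ vanishes, and (ii) checking that the extracted factor $g$ genuinely does not involve the interest parameter, as this is precisely what the criterion (\ref{matchingpt}) demands. Both are settled by inspection of $H(\boldsymbol\theta)$, so the proposition follows at once.
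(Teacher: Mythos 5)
Your argument is correct and is essentially identical to the paper's own proof: both rely on the diagonality of the Fisher information matrix to establish orthogonality and then exhibit $g(\boldsymbol\zeta)=1/(\beta_2\sqrt{\alpha_1})$ (respectively $1/(\beta_1\sqrt{\alpha_2})$) so that the Jeffreys prior takes the Tibshirani form (\ref{matchingpt}). Your write-up is in fact slightly more explicit than the paper's in verifying that $g$ is free of the interest parameter, but the route is the same.
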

\begin{proof} Let $\beta_1$ be the parameter of interest and $\boldsymbol\lambda=(\beta_2,\alpha_1,\alpha_2)$ be  the nuisance parameters. Since the information matrix is diagonal
and $H_{\beta_1,\beta_1}(\beta_1, \boldsymbol\lambda)=\dfrac{\alpha_1}{\beta^2_1}$, taking $g(\boldsymbol\lambda)=\frac{1}{\beta_2\sqrt{\alpha_1}}$, (\ref{priorjec2}) can be written in the form (\ref{matchingpt}).
The case when $\beta_2$ is the parameter of interest is similar. 
\end{proof}

The joint posterior obtained using Jeffreys' prior (\ref{priorjec2}) is
\begin{equation}\label{postjec2j}
\pi^{J}(\boldsymbol\theta|\boldsymbol{t}) \propto \prod_{j=1}^{2} \gamma(\beta_j|n_j, n_j/\hat{\beta_j}(\boldsymbol{t})) \gamma(\alpha_j|n_j+1, 1)
\end{equation}
Therefore, the Bayes estimators using the MAP are
\begin{equation}
\hat\beta_j^{Bayes} = \left(\frac{n_j-1}{n_j}\right)\hat{\beta}^{MLE}_j \quad  \hbox{and}  \quad \hat {\alpha}^{Bayes}_{j}  = n_j.
\end{equation}

On the other hand, 
considering Theorem \ref{theoveralpri}, the overall prior distributions is
\begin{equation}\label{postRef}
\pi^{R} (\boldsymbol\theta) \propto \frac{1}{\beta_1 \beta_2} \frac{1}{\sqrt{\alpha_1 \alpha_2}} \cdot
\end{equation}

\begin{proposition}\label{propov2cmat} The overall reference prior (\ref{postRef}) is a matching prior for all parameters. 
\end{proposition}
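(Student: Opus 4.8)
The plan is to mirror the argument used for Proposition \ref{propsjo2c} (and the common-$\beta$ case of Proposition \ref{overreprofmat}), exploiting the fact that the Fisher information matrix displayed just above is diagonal. Diagonality means that every coordinate of $\boldsymbol\theta=(\beta_1,\beta_2,\alpha_1,\alpha_2)$ is orthogonal to every other in the Cox--Reid sense, so Tibshirani's criterion (\ref{matchingpt}) applies with \emph{any} single coordinate taken as the interest parameter $\omega$ and the remaining three collected into the nuisance vector $\boldsymbol\zeta$. It therefore suffices to check, for each of the four choices of $\omega$, that the overall reference prior (\ref{postRef}) factors as $\pi^R(\boldsymbol\theta)=g(\boldsymbol\zeta)\sqrt{H_{\omega\omega}(\boldsymbol\theta)}$ with $g>0$ and independent of $\omega$.

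First I would treat $\omega=\beta_1$, with $\boldsymbol\zeta=(\beta_2,\alpha_1,\alpha_2)$. Since $H_{\beta_1\beta_1}=\alpha_1\beta_1^{-2}$, we have $\sqrt{H_{\beta_1\beta_1}}=\sqrt{\alpha_1}/\beta_1$, and dividing (\ref{postRef}) by this quantity leaves $g(\boldsymbol\zeta)=1/(\beta_2\,\alpha_1\sqrt{\alpha_2})$, which is positive and free of $\beta_1$. The choice $\omega=\beta_2$ is identical after interchanging the indices $1\leftrightarrow 2$.

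Next I would take $\omega=\alpha_1$, with $\boldsymbol\zeta=(\beta_1,\beta_2,\alpha_2)$. Here $H_{\alpha_1\alpha_1}=\alpha_1^{-1}$, so $\sqrt{H_{\alpha_1\alpha_1}}=\alpha_1^{-1/2}$ and (\ref{postRef}) factors with $g(\boldsymbol\zeta)=1/(\beta_1\beta_2\sqrt{\alpha_2})$, again positive and independent of $\alpha_1$; the case $\omega=\alpha_2$ follows by the same $1\leftrightarrow 2$ symmetry. In every case the prior is brought into the matching form (\ref{matchingpt}), which completes the argument.

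There is no genuine obstacle here: the whole proof is a routine factorization once diagonality is invoked to guarantee full mutual orthogonality. The only point that deserves a sentence of care is verifying that each $g$ really is a function of the nuisance block alone, i.e.\ that the power of the interest parameter cancels exactly, since that independence is precisely what Tibshirani's theorem requires; the symmetry of the model under $1\leftrightarrow 2$ then halves the bookkeeping.
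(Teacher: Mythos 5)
Your proposal is correct and follows essentially the same route as the paper: the paper's proof simply delegates the $\beta_j$ cases to Proposition~\ref{propsjo2c} and the $\alpha_j$ cases to Proposition~\ref{overreprofmat}, whereas you carry out the same factorization $\pi^R(\boldsymbol\theta)=g(\boldsymbol\zeta)\sqrt{H_{\omega\omega}(\boldsymbol\theta)}$ explicitly for each of the four choices of interest parameter, with the correct expressions $g=1/(\beta_2\,\alpha_1\sqrt{\alpha_2})$ for $\omega=\beta_1$ and $g=1/(\beta_1\beta_2\sqrt{\alpha_2})$ for $\omega=\alpha_1$. Writing out the cancellations is a minor (and welcome) expansion of the paper's argument, not a different method.
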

\begin{proof} The proofs for $\beta_1$ and $\beta_2$ follow the same steps as in the proof of Proposition \ref{propsjo2c}. The cases of $\alpha_1$ and of $\alpha_2$ follow directly from Proposition \ref{overreprofmat}. 

\end{proof}

The reference posterior distribution is given by
\begin{equation}
\pi^{R}(\boldsymbol\theta|\boldsymbol{t}) \propto \prod_{j=1}^{2} \gamma(\beta_j|n_j, n_j/\hat{\beta_j}(\boldsymbol{t})) \gamma(\alpha_j|n_j+ \tfrac{1}{2}, 1) .
\end{equation}
Hence, the MAP estimator for $\beta_j$ is given by
\begin{equation}
\hat{\beta_j}^{Bayes} = \left(\frac{n_j-1}{n_j}\right)\hat{\beta_j}^{MLE}.
\end{equation}
To obtain the Bayes estimators for $\alpha_j$ we considered the same argument described in the last section, which gives in this case
\begin{equation}
\hat\alpha_j^{Bayes}=n_j.
\end{equation}

\subsection*{Case 3: p causes of failure and different $\beta$s}

The likelihood function in this case is
\begin{equation}
L(\boldsymbol\theta) \propto \prod_{j=1}^{p} \gamma(\beta_j | n_j + 1, n_j / \hat{\beta_j}) \gamma( \alpha_j | n_j + 1, 1) ,
\end{equation}
where $\boldsymbol\theta=(\beta_1,\ldots,\beta_p,\alpha_1,\ldots,\alpha_p)$. The MLEs have explicit solutions
\begin{equation}
\hat{\beta}^{MLE}_j =   \frac{n_j}{\sum_{i=1}^{n} \log (T / t_i) (\mathbb{I} ( \delta_i = j))} \quad  \hbox{and}  \quad \hat { \alpha}^{MLE}_{j} = n_j
\end{equation}
and the Fisher information matrix is
$$
H(\boldsymbol\theta) = \left[
\begin{array}{cccccc}
\alpha_1\beta_1^{-2} &  0  & \ldots & \ldots & \ldots & 0 \\
0  & \ddots & 0 & \ldots & \ldots & 0 \\
\vdots & 0 & \alpha_p\beta_p^{-2} & 0 & \ldots & 0 \\
\vdots  & \vdots & 0 & \alpha_1^{-1} & \ldots & 0 \\
\vdots & \vdots & \vdots & \vdots & \ddots & 0 \\
0 & \ldots & 0 & 0 & 0 & \alpha_p^{-1} \\
\end{array}
\right].
$$
The Jeffreys prior is
\begin{eqnarray}\label{priorjfp}
\pi^J(\boldsymbol\theta) \propto \prod_{j=1}^{p} \frac{1}{\beta_j} ,
\end{eqnarray}
which gives the posterior distribution 
\begin{eqnarray}\label{genpostjef}
\pi^J(\boldsymbol\theta | \boldsymbol t) \propto \prod_{j=1}^{p} \gamma(\beta_j | n_j, n_j / \hat{\beta_j}) \gamma( \alpha_j | n_j+1, 1).
\end{eqnarray}

To prove that  (\ref{priorjfp}) is a matching prior for $\beta_j, j=1,\ldots, p$ we can consider the same steps of the proof of 
Proposition \ref{propsjo2c}.

\begin{table*}[!t]
\centering
\caption{The MRE, MSE from the estimates considering different values of $n$  with $M=1.000,000$ simulated samples using the different estimation methods.}
\begin{tabular}{c|c|r|r|r|r|r|r|r|r|r|r}
\hline 
\hline
& & \multicolumn{2}{c|}{Scenario 1}  & \multicolumn{2}{c|}{Scenario 2}   & \multicolumn{2}{c|}{Scenario 3}   & \multicolumn{2}{c|}{Scenario 4} & \multicolumn{2}{c}{Scenario 5} \\
\hline

Parameter & Method & MRE & MSE   & MRE & MSE  & MRE & MSE  & MRE & MSE  & MRE & MSE  \\
\hline
           & MLE    & 1.2401  & 2.5087 & 1.0411 & 0.1484   & 1.2897 &  3.2159  & 1.2344 & 2.6494 & 1.1636 & 0.0255     \\ 
$\beta_1$  & Bayes  & 0.9991  & 0.9509 & 1.0000 & 0.1313   & 0.9993 &  1.1567  & 0.9995 & 1.0319 & 0.9993 & 0.0136   \\  \hline
           & MLE    & 1.5693  & 6.7586 & 1.5272 & 13.0190   & 1.0812 &  0.0740  & 1.0771 & 0.0527 & 1.0101 & 0.0426    \\ 
$\beta_2$  & Bayes  & 0.9980  & 1.8000 & 1.0008 & 3.4447   & 0.9998 &  0.0576   & 0.9998 & 0.0415 & 0.9999 & 0.0413     \\ \hline
           & MLE    & 1.0102  & 6.1686 & 0.9997 & 26.4448  & 1.0210 &  5.1477   & 1.0091 & 6.2929 & 1.0015 & 8.3588    \\ 
$\alpha_1$ & Bayes  & 1.0102  & 6.1686 & 0.9997 & 26.4448  & 1.0210 &  5.1477  & 1.0091 & 6.2929 & 1.0015 & 8.3588   \\ \hline
           & MLE    & 1.2306  & 2.2679 & 1.1700 & 2.5335  & 0.9997 & 14.4912   & 0.9998 & 15.1111 & 0.9999 & 99.8225     \\
$\alpha_2$ & Bayes  & 1.2306  & 2.2679 & 1.1700 & 2.5335  & 0.9997 & 14.4912   & 0.9998 & 15.1111 & 0.9999 & 99.8225   \\ \hline \hline 
\end{tabular}
\label{tableres1a}
\end{table*}

\begin{table*}[!t]
\centering
\caption{Coverage probabilities from the estimates considering different scenarios  with $M=1.000,000$ simulated samples and different estimation methods.}
\begin{tabular}{c|c|c|c|c|c|c}
\hline 
\hline
$\boldsymbol{\theta}$ & Method & Scenario 1 & Scenario 2   & Scenario 3 & Scenario 4 & Scenario 5  \\
\hline
           & MLE       & 0.9556 & 0.9523 & 0.9555 & 0.9555  & 0.9553     \\ 
           & CMLE      & 0.8740 & 0.9357 & 0.8599  & 0.8759  & 0.8958    \\ 
$\beta_1$  & Jeffreys  & 0.9503 & 0.9501 & 0.9503 & 0.9502  & 0.9503     \\ 
           & Reference & 0.9503 & 0.9501 & 0.9503 & 0.9502  & 0.9503     \\\hline
           & MLE       & 0.9538 & 0.9538 & 0.9538 & 0.9537  & 0.9501     \\ 
           & CMLE      & 0.7869 & 0.7989 & 0.9226 & 0.9237  & 0.9460     \\ 
$\beta_2$  & Jeffreys  & 0.9501 & 0.9498 & 0.9499 & 0.9499  & 0.9497     \\ 
           & Reference & 0.9501 & 0.9498 & 0.9499 & 0.9499  & 0.9497     \\\hline
           & MLE       & 0.8884 & 0.9325 & 0.9352 & 0.8943  & 0.9196     \\ 
           & CMLE      & 0.8884 & 0.9325 & 0.9352 & 0.8943  & 0.9196     \\ 
$\alpha_1$ & Jeffreys  & 0.9674 & 0.9494 & 0.9715 & 0.9636  & 0.9661     \\ 
           & Reference & 0.9338 & 0.9494 & 0.9715 & 0.9518  & 0.9447     \\\hline
           & MLE       & 0.9971 & 0.9940 & 0.9438 & 0.9218  & 0.9453     \\ 
           & CMLE      & 0.9971 & 0.9940 & 0.9438 & 0.9218  & 0.9453     \\ 
$\alpha_2$  & Jeffreys & 0.9203 & 0.9514 & 0.9365 & 0.9481  & 0.9497     \\ 
           & Reference & 0.9707 & 0.9514 & 0.9526 & 0.9439  & 0.9552     \\\hline  \hline 
\end{tabular}
\label{tableres1d}
\end{table*}

Finally, the reference prior using Theorem \ref{theoveralpri} is
\begin{equation}\label{genpriref}
\pi^R (\boldsymbol\theta) \propto \prod_{j=1}^{p} \beta_j^{-1} \alpha_j^{-1/2} .
\end{equation}
Thus, in this case, the joint posterior distribution is
\begin{eqnarray}\label{genpostref}
\pi^R(\boldsymbol\theta | \boldsymbol t) \propto \prod_{j=1}^{p} \gamma(\beta_j | n_j, n_j / \hat{\beta_j}) \gamma( \alpha_j | n_j + \tfrac{1}{2} , 1).
\end{eqnarray}

\begin{proposition} The overall reference prior (\ref{genpostref}) is matching prior for all parameters. 
\end{proposition}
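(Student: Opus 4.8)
The plan is to apply Tibshirani's matching-prior criterion (\ref{matchingpt}) to each of the $2p$ parameters in turn, exploiting the fact that the Fisher information matrix $H(\boldsymbol\theta)$ displayed above is diagonal. Because the matrix is diagonal, any single parameter is orthogonal, in the Cox--Reid sense, to the block formed by all the remaining parameters; hence for each choice of interest parameter the nuisance vector is automatically orthogonal and the matching condition reduces to verifying that the overall reference prior (\ref{genpriref}) factors as $g(\boldsymbol\zeta)\sqrt{I_{\omega\omega}(\omega,\boldsymbol\zeta)}$, with $\omega$ the interest parameter and $\boldsymbol\zeta$ the nuisance vector.

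First I would take a generic $\beta_k$ as the parameter of interest, with nuisance vector $\boldsymbol\zeta=(\beta_1,\ldots,\beta_{k-1},\beta_{k+1},\ldots,\beta_p,\alpha_1,\ldots,\alpha_p)$. Since $H_{\beta_k,\beta_k}=\alpha_k\beta_k^{-2}$, we have $\sqrt{I_{\beta_k\beta_k}}=\sqrt{\alpha_k}\,\beta_k^{-1}$, and dividing (\ref{genpriref}) by this quantity yields $g(\boldsymbol\zeta)=\alpha_k^{-1}\prod_{j\neq k}\beta_j^{-1}\prod_{j\neq k}\alpha_j^{-1/2}$, which depends only on the nuisance parameters. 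Thus (\ref{genpriref}) has exactly the form (\ref{matchingpt}), establishing the matching property for every $\beta_k$; this is the direct $p$-fold analogue of the argument in Proposition \ref{propsjo2c}.

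Next I would take a generic $\alpha_k$ as the parameter of interest. Here $H_{\alpha_k,\alpha_k}=\alpha_k^{-1}$, so $\sqrt{I_{\alpha_k\alpha_k}}=\alpha_k^{-1/2}$, and dividing (\ref{genpriref}) by it gives $g(\boldsymbol\zeta)=\prod_{j=1}^{p}\beta_j^{-1}\prod_{j\neq k}\alpha_j^{-1/2}$, again a function of the nuisance parameters alone. Hence (\ref{genpriref}) is once more of the form (\ref{matchingpt}), which recovers for general $p$ the $\alpha$-arguments already given in Propositions \ref{overreprofmat} and \ref{propov2cmat}.

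The only genuine subtlety, and the step I would flag as the main obstacle, is justifying that the single-nuisance-parameter criterion of Tibshirani extends to the full vector of $2p-1$ nuisance parameters. I would stress that this extension is precisely what is guaranteed by the diagonal structure of $H(\boldsymbol\theta)$: since the interest parameter is jointly orthogonal to the entire nuisance block, the first-order probability-matching condition still reduces to the factorization (\ref{matchingpt}). Once this point is granted, the remainder is the purely algebraic verification carried out above, valid for each of the $2p$ parameters.
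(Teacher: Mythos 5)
Your proof is correct and follows essentially the same route as the paper, which simply reduces this proposition to the two-cause arguments of Propositions \ref{propsjo2c} and \ref{propov2cmat}; you have merely written out explicitly, for general $p$ and a generic index $k$, the factorization of (\ref{genpriref}) into $g(\boldsymbol\zeta)\sqrt{I_{\omega\omega}}$ that those earlier proofs perform, and your algebra checks out (note that $\alpha_k$ belongs to the nuisance vector when $\beta_k$ is of interest, so its appearance in your $g(\boldsymbol\zeta)$ is legitimate). The point you flag about extending Tibshirani's criterion to a vector of orthogonal nuisance parameters is the same one the paper invokes in its discussion of matching priors, so no new idea is needed.
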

\begin{proof} The proof is essentially the same as that 
of Proposition \ref{propov2cmat}. 
\end{proof}

Using the same approach of the case where $\beta_1\neq\beta_2$, 
the Bayes estimators are 
\begin{equation}
\hat{\beta_j}^{Bayes} = \left(\frac{n_j-1}{n_j}\right)\hat{\beta_j}^{MLE}
\end{equation}
and, for $j = 1 , \ldots , p$, 
\begin{equation}
\hat\alpha_j^{Bayes}=n_j.
\end{equation}

\section{Simulation Study}

In this section we present a simulation study 
to compare the Bayes estimators and the MLEs. 
We used two criteria to evaluate the estimators behaviour: 
the mean relative error (MRE) 
\begin{equation*}
\f{MRE}_{\hat{\theta}_i}=\frac{1}{M}\sum_{j=1}^{M}\frac{\hat\theta_{i,j}}{\theta_i} 
\end{equation*} 
and the mean square error (MSE)
\begin{equation*}
\f{MSE}_{\hat{\theta}_i} = 
\sum_{j=1}^{M}\frac{(\hat\theta_{i,j}-\theta_i)^2}{M},
\end{equation*} 
where $M$ is 
is the number of estimates 
(i.e.\ the Monte Carlo size), which we take 
$M=1,000,000$ throughout the section,
and $\boldsymbol{\theta}=(\theta_1,\ldots,\theta_p)$ is the vector of parameters. 

Additionally, for the objective bayesian credibility intervals 
and the asymptotic maximum likelihood based confidence intervals
for $\beta_1$, $\beta_2$, $\alpha_1$ and $\alpha_2$
we computed the $95\%$ interval coverage probability, 
denoted by $CP_{95\%}$. 
Good estimators should have 
MRE close to one and MSE close to zero and  
good intervals should be short while showing 
$CP_{95\%}$ close to 95\%.
We also considered a confidence interval based 
on the CMLE obtained from the unbiased estimator (\ref{cmleeq}) 
and the asymptotic variances estimated from the
Fisher information matrix, similar to what is done 
to obtain the ML interval.
 
The results were computed using the software R. 
Below we show the results 
for a single system subject to two causes of failure. 
We assumed that the two-component system was observed on the fixed time interval $[0, T]$ and
considered five different scenarios for $T$ and 
the parameters: 

\begin{itemize}
\item Scenario 1: $\beta_1=1.5, \alpha_1=6.45, \beta_2=1.0, \alpha_2=2.75, T=5.5$;
\item Scenario 2: $\beta_1=1.75, \alpha_1=26.46, \beta_2=1.25, \alpha_2=3.11, T=6.5$;
\item Scenario 3: $\beta_1=1.5,  \alpha_1=5.59, \beta_2=0.8, \alpha_2=14.50, T=5.0$;
\item Scenario 4: $\beta_1=1.6,  \alpha_1=6.59, \beta_2=0.7, \alpha_2=15.12, T=5.0 $;
\item Scenario 5: $\beta_1=0.25, \alpha_1=8.46, \beta_2=2.0, \alpha_2=100.00, T=20.0 $;
\end{itemize}

The values of the parameters were selected in order to obtain different samples sizes. The results were presented only for these five scenarios due to the lack of space. However, the obtained results are similar for other choices of the parameters. Using the fact that the causes are independent and 
well known results about NHPPs 
\cite{rigdon2000statistical}, 
for each Monte Carlo replication the
failure times were generated as follows:
\begin{itemize}
\item[] \textbf{Step 1}: 
For each cause of failure, 
generate random numbers $n_j \sim Poisson(\Lambda_j)$ ($j=1,2$).
\item[] \textbf{Step 2}: 
For each cause of failure, 
let the failure times be $t_{1,j} , \ldots , t_{n_j , j}$, 
where $t_{i,j} = T \, U_{i,j}^{1/\beta_j}$ and 
$U_{1,j} , \ldots , U_{n_j , j}$ are the order 
statistics of a size $n_j$ random sample from 
the standard Uniform distribution.
\item[] \textbf{Step 3}: Finally, to get the data 
in the form $(t_i , \delta_i)$, 
let the $t_i$s be the set of ordered 
failure times and set $\delta_i$ equal to 1 or 2 
according to the corresponding cause of failure 
(i.e.\ set $\delta_i =1$ if $t_i = t_{h,1}$ for some $h$ 
and $\delta_i =2$ otherwise).
\end{itemize}

Tables \ref{tableres1a} - \ref{tableres1d} present the results.
In Table \ref{tableres1a} the estimators for $\beta_j$ and $\alpha_j, j=1,2,$ are the same for the CMLE, Jeffreys and reference, we denote as Bayes estimators. 
However, they are different when computing the CIs and the CPs.

We note from Table \ref{tableres1a} 
that the Bayes estimator behaves consistently better 
than the MLE across the different scenarios, and that 
this holds both for the MRE and the MSE criteria.

Regarding the coverage probabilities in Table \ref{tableres1d}, 
we note that for both the MLE and the CMLE CIs the CP are far 
from the assumed levels, 
especially for the shape parameters $\alpha_1$ and $\alpha_2$. 
On the other hand, the CP of the Bayes estimators using the reference posterior returned accurate coverage probabilities. 
These results were expected as we had proved that the overall reference prior is a matching prior for all the parameters. 


\vspace{0.3cm}
\section{Real Data Applications}
\vspace{0.3cm}

Below we analyze the two data sets described Section II.


\subsection*{sugarcane harvester}
\vspace{0.3cm}

There were 10 failures attributed to Cause 1, 24 to Cause 2 and 14 to Cause 3. Figure \ref{figureA} suggests a high recurrence of failures. 
The harvester has a history of intense breakdown in an observation window of only 8 months. This behavior may be explained 
by the intense use of the machine. 

\begin{figure}[!h]
	\centering
	\includegraphics[width=8.5cm]{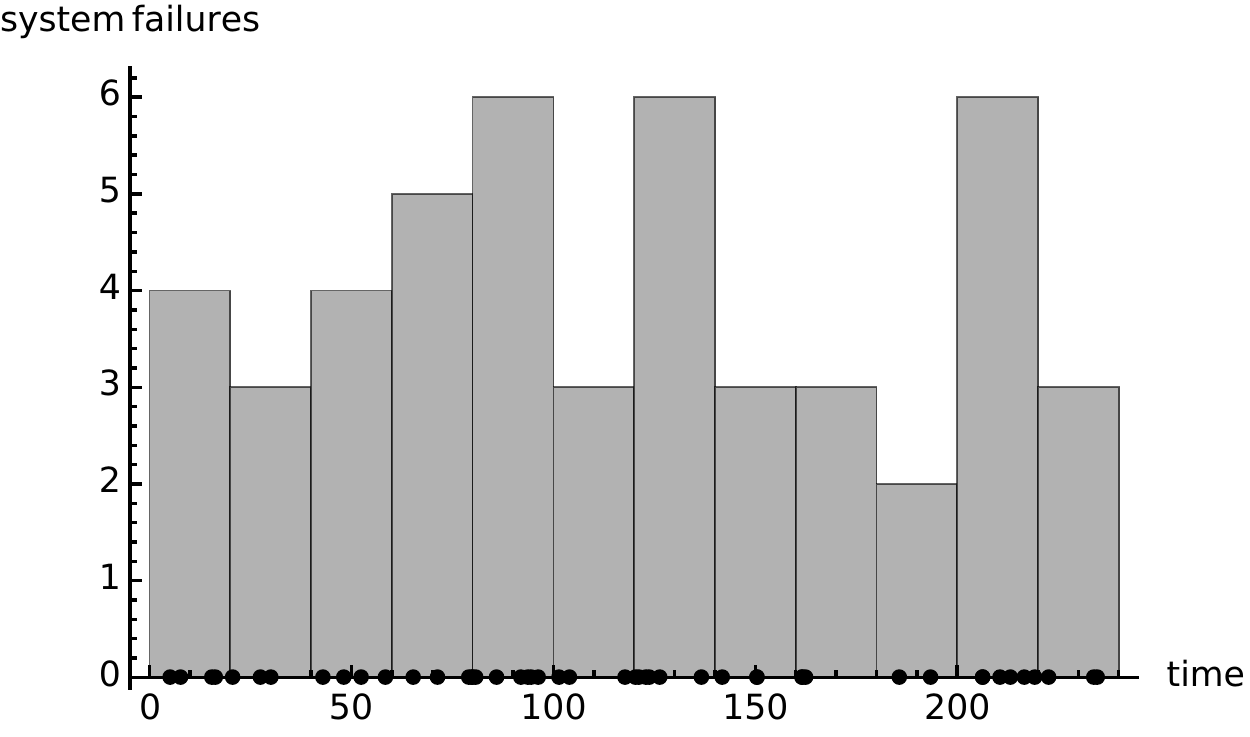}
	\caption{The black points on the x-axis indicates the system failures; the histogram shows the number of failures in each 20 days interval.}\label{figureA}
\end{figure}

We assessed initially the adequacy of the PLP for each cause of failure with the help of a Duane plot; see  \cite{rigdon2000statistical}. 
Figure \ref{figureB} shows plots of the 
logarithm of number of failures $N_j (t)$ against the
logarithm of accumulated mileages.  
Since the three plots exhibit a reasonable linearity, 
they suggest that the PLP model is adequate. 

\begin{figure}[!h]
	\centering
	\includegraphics[width=8cm]{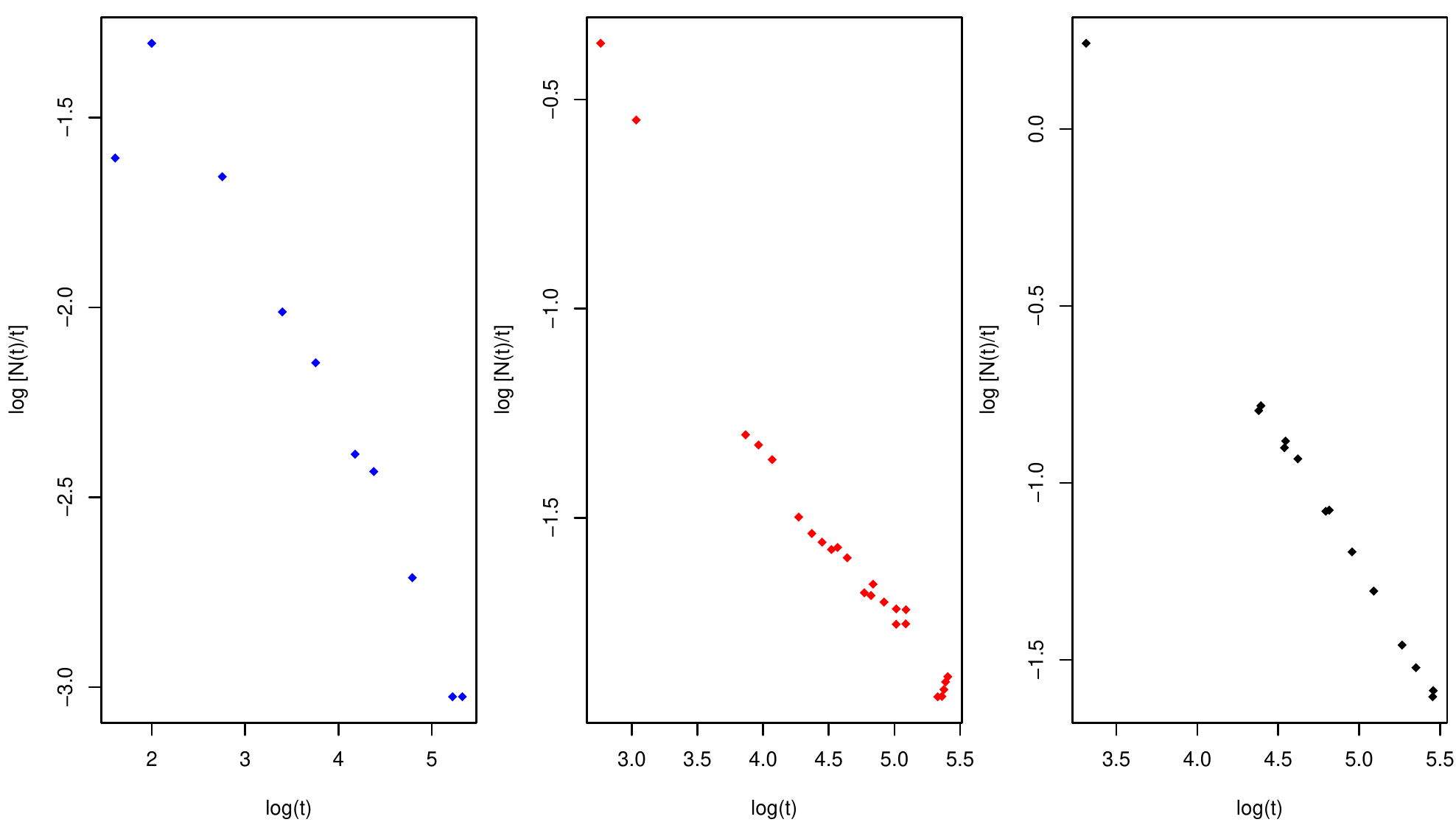}
	\caption{Duane plots: Cause 1 (blue), Cause 2 (red), Cause 3 (black).} \label{figureB}
\end{figure}

We summarize here the results concerning the objective Bayesian inference using the reference prior (\ref{genpriref}).
The Bayes estimates are shown in Table \ref{tableW2}, along with the corresponding marginal posterior standard deviations (SD) and credible intervals (CI). We remark that 
this posterior summary does not require stochastic simulation to be obtained. For instance, the credible intervals can be calculated directly from the posterior quantiles from (\ref{genpostref}).

\begin{table}[!h]
\renewcommand{\arraystretch}{1.3}
\caption{Bayesian estimators using reference posterior for harvester data}
\centering
\begin{tabular}{crcc}
\hline\hline
Parameter & Bayes & SD & CI (95\%)\\ 
\hline\hline
$\beta_1$ & 0.553 & 0.175 & [0.265 ; 0.945]  \\ 
$\beta_2$ & 1.079 & 0.220 & [0.691 ; 1.551] \\ 
$\beta_3$ & 1.307 & 0.349 & [0.714 ; 2.075] \\ 
$\alpha_1$ & 10.000 & 3.162 & [5.141 ; 17.739] \\ 
$\alpha_2$ & 24.000 & 4.899 & [15.777 ; 35.111] \\ 
$\alpha_3$ & 14.000 & 3.742 & [8.024 ; 22.861] \\ 
\hline \hline 
\end{tabular}\label{tableW2}
\end{table}

The results suggest that the the reliability 
of the electrical components (Cause 1) is 
improving with time, since the corresponding 
$\hat{\beta}_1 = 0.553 < 1$, while the reliability 
of the elevator is decreasing ($\hat{\beta}_3 = 1.307 > 1$). 
The engine shows an intermediate behavior, since $\hat{\beta}_2 = 1.079$ 
is slightly greater than one). We remark that this information 
can provide important insights to the maintenance crew.

\subsection*{Automotive warranty claims data}
\vspace{0.3cm}

The dataset contains 99 failures attributed to Cause 1, 
118 to Cause 2 and 155 to Cause 3. 
The system failure distribution shown in 
Figure \ref{figureC} suggests that 
the number of failures decreases as time increases 
and, in fact, after 2000 miles there is a sharp reduction 
of failures due to all three causes.

\begin{figure}[!h]
	\centering
	\includegraphics[width=8.5cm]{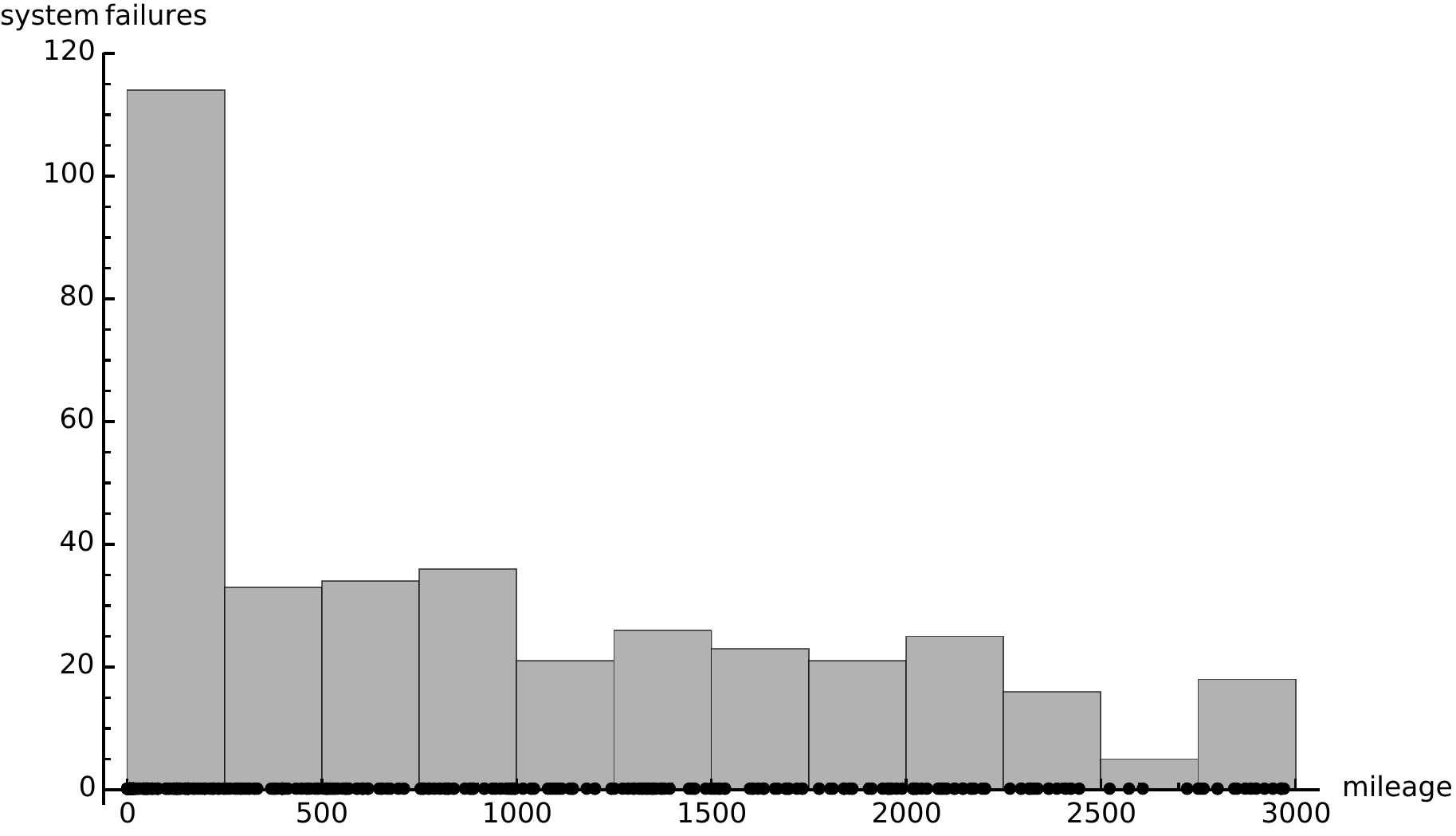}
	\caption{The black points on the x-axis indicates the system failures; the histogram shows the number of failures in each 250-mileage interval.}\label{figureC}
\end{figure}	

Figure \ref{figureD} shows the Duane plots for the three causes of failures. 
It suggest that, at least for Causes 2 and 3, the PLP should fit the 
data well. The plot corresponding to Cause 1 is less conclusive. 
Notwithstanding, we show below the results assuming the PLP 
for all three causes.

\begin{figure}[!h]
	\centering
	\includegraphics[width=8cm]{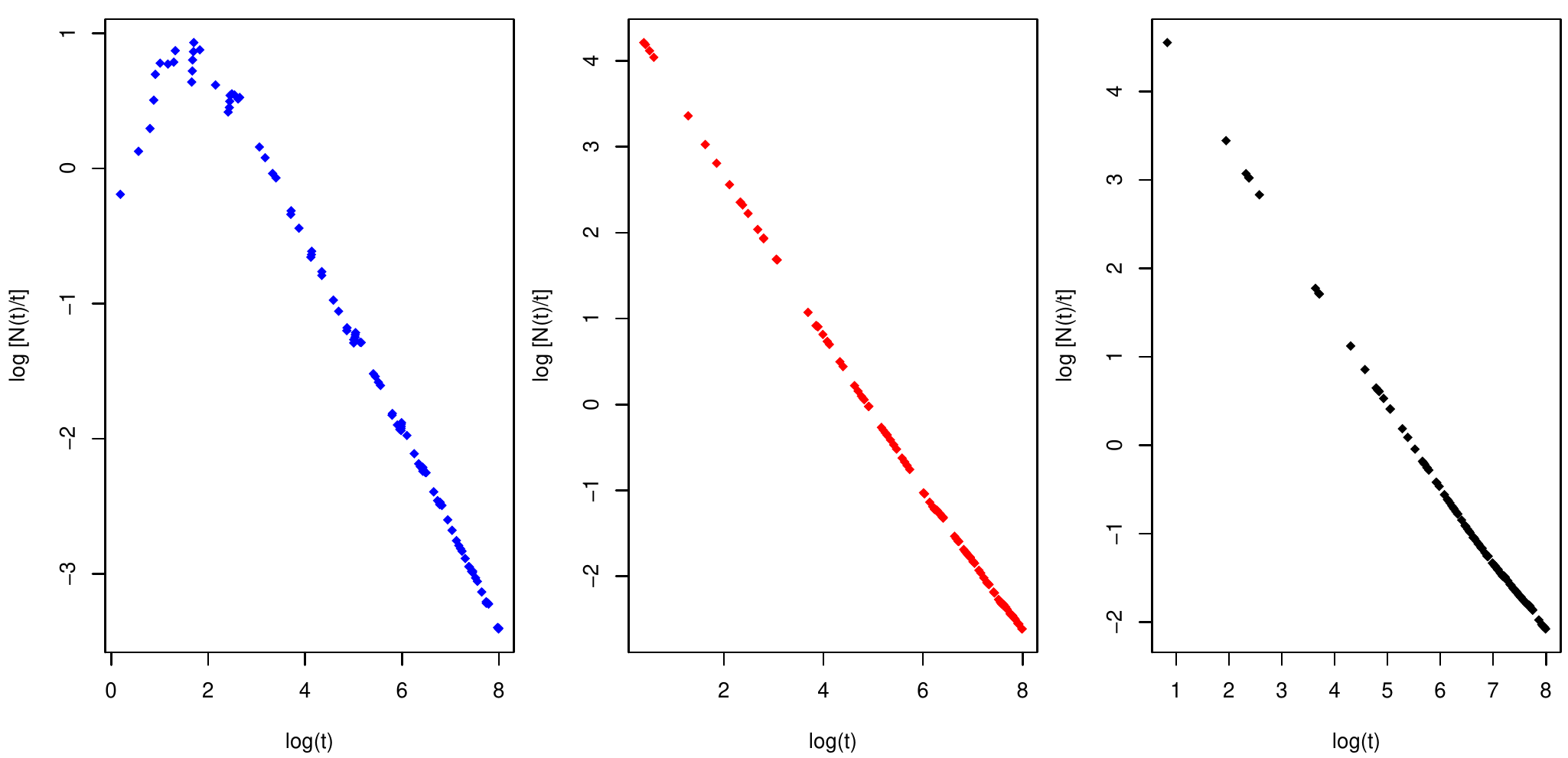}
	\caption{Duane plots: Cause 1 (blue), Cause 2 (red), Cause 3 (black).} \label{figureD}
\end{figure}

Table \ref{tableW} shows the estimates and credible intervals 
assuming the reference prior (\ref{genpriref}).

\begin{table}[!h]
\renewcommand{\arraystretch}{1.3}
\caption{Bayesian estimators using reference posterior for warranty claim data}
\centering
\begin{tabular}{crcc}
\hline\hline
Parameter & Bayes & SD & CI (95\%)\\ 
\hline\hline
$\beta_1$ & 0,329 & 0,033 & [0.267 ; 0.396] \\ 
$\beta_2$ & 0,451 & 0,042 & [0.374 ; 0.536] \\ 
$\beta_3$ & 0,721 & 0.058 & [0.612 ; 0.839] \\ 
$\alpha_1$ & 99.000 & 9.950 & [80.913 ; 119.980] \\ 
$\alpha_2$ & 118.000 & 10.863 & [98.127 ; 140.767] \\ 
$\alpha_3$ & 155.000 & 12.450 & [132.020 ; 180.874] \\ 
\hline \hline 
\end{tabular}
\label{tableW}
\end{table}

The obtained results are more precise than the presented in Somboonsavatdee and Sen \cite{somboonsavatdee2015statistical} as they are unbiased estimates. Moreover, the credibility intervals have more accurate nominal levels.

\section{Discussion}

The proposal to model a single reparable system under the assumption of minimal repair with competing risks has some gaps to be filled from the following points of view: Firstly, few studies have considered multiple 
failure causes from the perspective of repairable systems and cause-specific intensity functions. 
Secondly, Bayesian methods have been not well explored in this context. 
The competing risks approach may be advantageous 
in the engineering field because it may lead to a better understanding  
of the various causes of failure of a system and hence  
design strategies to improve the overall reliability.

We considered the competing risk approach in reparable systems under the action of multiple causes of failure, assuming that the causes act independently.
We proposed Bayesian estimates for the parameters of the 
failure intensity assuming the PLP model and using 
an reference posterior. We showed that the resulting marginal posterior intervals have accurate coverage in the frequentist sense. Moreover, the obtained posterior is proper and has interesting properties, such as one-to-one invariance, consistent marginalization, and consistent sampling properties. Moreover, the 
Bayes estimates have closed-form expressions and are naturally unbiased. An simulation study suggest that they 
outperform the estimates obtained from the ML approach.

The proposed methodology was applied to an original data set regarding failures of a sugarcane harvester 
classified according to three possible causes. Since the data contains few failures, classical CIs based on asymptotic ML theory could be inadequate in this case. 

There may be some interesting extensions of this work. 
One can consider a maintenance analysis scenario whose objective is to determine the optimal periodicity of preventive maintenance, as presented for instance in Oliveira et al. \cite{de2012bayesian}. 
More challenging could be to investigate dependency structures 
between the causes introducing, for instance, frailty terms in the model.

\section*{Acknowledgment}

The authors would like to thank.

\ifCLASSOPTIONcaptionsoff
  \newpage
\fi

\bibliographystyle{IEEEtran}

\bibliography{referencias}

\end{document}